\newtheorem{thm}{Theorem}[section]
\newtheorem{lem}[thm]{Lemma}
\newtheorem{defn}{Definition}[section]
\newtheorem{assu}{Assumption}[section]
\newtheorem{rk}{Remark}[section]
\numberwithin{equation}{section}
\begin{document}
\title{CT Image Reconstruction by Spatial-Radon Domain Data-Driven Tight Frame Regularization}
\author{Ruohan Zhan\footnotemark[1] \and
Bin Dong\footnotemark[2]}

\footnotetext[1]{Yuanpei College, Peking University, Beijing, CHINA (zrhan@pku.edu.cn).}
\footnotetext[2]{Corresponding author. Beijing International Center for Mathematical
Research, Peking University, Beijing, CHINA (dongbin@math.pku.edu.cn).
Research supported in part by the Thousand Talents Plan of China.}

\maketitle
\begin{abstract}
This paper proposes a spatial-Radon domain CT image reconstruction model based on data-driven tight frames (SRD-DDTF). The proposed SRD-DDTF model combines the idea of joint image and Radon domain inpainting model of \cite{Dong2013X} and that of the data-driven tight frames for image denoising \cite{cai2014data}. It is different from existing models in that both CT image and its corresponding high quality projection image are reconstructed simultaneously using sparsity priors by tight frames that are adaptively learned from the data to provide optimal sparse approximations. An alternative minimization algorithm is designed to solve the proposed model which is nonsmooth and nonconvex. Convergence analysis of the algorithm is provided. Numerical experiments showed that the SRD-DDTF model is superior to the model by \cite{Dong2013X} especially in recovering some subtle structures in the images.

\medskip
{\noindent\bf Keywords.} Computed tomography, data-driven tight frames, sparse approximation, spatial-Radon domain reconstruction.
\end{abstract}

\section{Introduction}

X-ray computed tomography (CT) has been widely used in clinic due to its great ability in visualizing interior structures. However, additional imaging dose to patients' healthy radiosensitive cells or organs has always been a serious clinical concern \cite{Islam2006Patient,brenner2007computed,Ding2009Radiation}. Low-dose CT is highly desirable if satisfactory image quality can be maintained for a specific clinical task. One commonly adopted strategy to achieve low-dose CT imaging, especially for cone beam CT (CBCT), is to reduce the total number of projections. However, this may also lead to degraded restored images if the reconstruction algorithm is not properly designed to incorporate missing information due to incomplete angular sampling and system noise. Therefore, many classical algorithms based on a complete angular sampling such as filtered back projection (FBP) \cite{Katsevich2002THEORETICALLY} will generate undesirable artifacts due to lack of measurements. Other inversion techniques such as pseudo-inverse based methods \cite{Antsaklis1991Stability,Tikhonov1978Solutions} also perform poorly at the presence of noise. Therefore, a more effective and robust method is needed to achieve satisfactory reconstruction for clinical purposes. In this paper, we shall focus on the problem of low-dose planer fan beam CT reconstruction of 2D images. However, the same modeling concept can be easily applied to 3D CBCT image reconstruction.

Assume that the X-ray point source with a fixed milliampere-second (mAs) setting rotates along a circle centered at the object, and a linear detector array is used. CT image reconstruction can be casted as the following linear inverse problem in discrete setting \begin{equation}\label{pro}
  f=Pu+\epsilon,
\end{equation}
where $P$ is the projection matrix generated by the Sidden's Algorithm \cite{Siddon1985Fast}, $f$ is the projection image whose rows indicate the data collected by each detector and columns indicate data collected from different projection angles, and $\epsilon$ is additive Gaussian white noise. To reduce radiation dose, one common way is to reduce the number of projection angles which leads to an under-determined (or rank deficient) linear system (or matrix $P$). This is the main challenge of reconstructing a desirable CT image $u$ from its projections $f$ via \eqref{pro}, and also the reason why traditional CT reconstruction algorithms such as FBP and pseudo-inverse based methods do not perform well.

In image restoration, many problems can be formulated as the same linear inverse problem \eqref{pro} with $P$ taking different forms for different image restoration problems. For instance, $P$ is an identity operator for image denoising; a convolution operator for image deblurring; and a restriction operator for image inpainting. Image restoration has the same challenge as CT image reconstruction, which is the rank deficiency of the matrix $P$. A good image restoration method should be capable of smoothing the image so that noise and other artifacts are suppressed to the greatest extend, while at the same time, preserving important image features such as edges. This is a challenging task since smoothing and preservation of features are often contradictory to each other.

Most of the existing models and algorithms for image restoration are transformation based. The key to the success of any transformation based image restoration method is to find a transform that can identify local features from the given image, or in other words, to separate singularities and smooth image components. Such property is closely related to the concept called ``sparse approximation" which is broadly adopted in various problems in image processing, image analysis, matrix completion, high dimensional data analysis, etc. Interested reader should consult \cite{Dong2015} for a review of classical and recent developments of image restoration methods.

One of the most successful transformations in image restoration is the wavelet frame transform. It has been implemented with excellent results in both classical \cite{cai2008framelet,chan2003wavelet,cai2009split,cai2009linearized,zhang2011l0min,dong2011efficient,liang2013wavelet,hou2013recovering} and some more challenging image restoration problems \cite{cai2009blind,cai2009blindcvpr,dong2012wavelet,gong2014image}. Frames provide vast flexibility in designing adaptive and non-local filters with improved performance in applications \cite{quan2014data,cai2014data,bao2014convergence,tai2015adaptive}. The application of wavelet frames has gone beyond image restoration. They have been successfully used in video processing \cite{ji2011robust}, image segmentation \cite{Dong2010Seg,tai2013wavelet} and classifications \cite{wendt2009wavelet,bao2014l0}. More recently, wavelet frames are constructed on non-flat domains such as surfaces \cite{jiang2011highly,dong2015surf} and graphes \cite{hammond2011wavelets,gavish2010multiscale,leonardi2013tight,dong2015sparse} with applications to denoising \cite{jiang2011highly,dong2015surf,dong2015sparse} and classifications \cite{dong2015sparse}.

Another class of methods for image restoration that have been developed through a rather different path is the PDE based approach \cite{sapiro2001geometric,OF,ChanShen} which started with the refined total variation (TV) model \cite{ROF} and anisotropic diffusion \cite{PM}. The PDE based approach includes variational and (nonlinear) PDE based methods. Both variational and PDE methods can be understood as transformation based methods as well, where the transformations are the differential operators involved in the models \cite{DJS2013}.

In recent work by \cite{CDOS2011,CDS2014}, fundamental connections between wavelet frame based approach and variational methods were established. In particular, connections to the total variation model \cite{ROF} was established in \cite{CDOS2011}, and to the Mumford-Shah model \cite{mumford1989optimal} was established in \cite{CDS2014}. Furthermore, in \cite{DJS2013}, the authors established a generic connection between iterative wavelet frame shrinkage and general nonlinear evolution PDEs which include the Perona-Malik equation \cite{PM} and the shock-filters \cite{Osher1990} as special cases. The series of three papers \cite{CDOS2011,DJS2013,CDS2014} showed that wavelet frame transforms are discretization of differential operators in both variational and PDE frameworks, and such discretization is superior to some of the traditional finite difference schemes for image restoration. This new understanding essentially merged the two seemingly unrelated areas: wavelet frame base approach and PDE based approach. It also gave birth to many innovative and more effective image restoration models and algorithms.

The concept of sparse approximation via linear transformations originated from image restoration was also applied to CT image reconstruction due to the similarity of the problems in nature. For example, wavelet frame based methods are developed for standard CT image reconstruction \cite{jia2011gpu}, for 4D CT image reconstruction \cite{gao2011robust,gao20124d,cai2014cine} and spectral CT reconstruction \cite{zhao2013tight}. TV-based regularization model was also applied to CT image reconstruction in \cite{sidky2006accurate,sidky2008image,Jia2010GPU,zhang2010bregmanized,chen2013limited,han2013image}. Many other regularization based methods for CT image reconstruction have also been introduced \cite{thibault2007three,wang2009iterative,tang2009performance,ouyang2011effects,lee2012improved,lauzier2013characterization}, as well as dictionary learning based methods \cite{xu2012low,chen2013improving,zhou2013adaptive,chen2014artifact}.

However, all methods mentioned above attempted to recover a good CT image $u$ with a fixed projection image $f$. Various sparsity based prior knowledge on the CT image $u$ have been used, while the prior knowledge on $f$ is yet to be fully exploited. The projection image $f$ we collect using under-sampled angles will suffer from lack of angular resolution and measurement noise. Therefore, to reconstruct a high quality CT image $u$ from \eqref{pro}, we need to restore a high quality (improved angular resolution and reduced noise) projection image $f$ using properly chosen prior knowledge on $f$. Since $f$ and $u$ are linked by the linear inverse problem \eqref{pro}, it is more effective to restore both $u$ and $f$ simultaneously. Such modeling philosophy was first introduced in \cite{Dong2013X} with success, where the authors proposed the following optimization model based on sparse approximation of tight wavelet frames
\begin{equation}\label{primal}
\min_{f,u}\ \frac{1}{2}\|R_{\Lambda^c}(Pu-f)\|_2^2+\frac{1}{2}\|R_\Lambda(Pu)-f_0\|_2^2+\frac{\kappa}{2}\|R_\Lambda f-f_0\|_2^2+\lambda_1\|W_1f\|_1+\lambda_2\|W_2 u\|_1.
\end{equation}
Here, $f_0$ is the projection image we collect from the scanner defined on the grid $\Lambda$ of size $N_D\times N_P$, where $N_D$ is the total number of detectors and $N_P$ is the number of angular projections. The projection image $f$ that \eqref{primal} tries to recover is defined on a grid $\Omega\supset\Lambda$ of size $N_D\times \widetilde N_P$. In this paper, we focus on the case $\widetilde N_P=2N_P$, which means we want to recover a projection image $f$ that has twice the angular resolution as that of $f_0$. The operator $R_\Lambda$ is the restriction operator associated to the set $\Lambda$. The first three terms of \eqref{primal} makes sure that $f$ is consistent with $f_0$ on $\Lambda$ and $Pu\approx f$. The last two terms are the sparsity priors assumed on $u$ and $f$, where $W_1$ and $W_2$ are two (possibly different) tight wavelet frame transforms. We refer the interested readers to \cite{Dong2013X} for more details.

Although positive results were reported in \cite{Dong2013X}, the sparsity priors based on $W_1$ and $W_2$ can be further improved. It is known in the literature of image restoration that wavelet frames can sparsely approximate images or piecewise smooth functions in general. However, for a specifically given image, the sparse approximation by a pre-constructed wavelet frame system may not be ideal. This is the main reason why data-driven tight frames or bi-frames generally outperforms regular wavelet frames in image restoration \cite{cai2014data,bao2014convergence,tai2015adaptive}. In this paper, we propose to use data-driven tight frames of \cite{cai2014data,bao2014convergence} as our sparsity priors for both $u$ and $f$. The contribution of this paper is threefold: (1) the introduction of a spatial-Radon domain CT image reconstruction model based on data-driven tight frames (SRD-DDTF); (2) the design of an alternative optimization algorithm; and (3) convergence analysis of the proposed algorithm.

The rest of the paper is organized as follows. In Section \ref{Sec:Review} we review the basic knowledge of wavelet frames and data-driven tight frames. In Section \ref{Sec:Model-Alg}, we introduce our spatial-Radon domain CT image reconstruction model based on data-driven tight frames, followed by an efficient algorithm and its convergence analysis. In Section \ref{Sec:Numerical}, we present some numerical simulations, and the concluding remarks are given in Section \ref{Sec:Conclusion} at the end.

\section{Reviews and Preliminaries}\label{Sec:Review}

\subsection{Tight Wavelet Frames}

In this section, we briefly introduce the concept of tight wavelet frames. The interested readers should consult \cite{ron1997affine,ron1997affineII,Dau,Daubechies2003} for theories of frames and wavelet frames, \cite{ShenICM2010,Dong2015} for a short survey on the theory and applications of frames, and \cite{Dong2010IASNotes} for a more detailed survey.

For a given set of functions $\Psi=\{\psi_1,\psi_2,\dots,\psi_r\}\subset L_2(\mathbb{R})$, the quasi-affine wavelet system is defined as
\begin{equation*}\label{E:Quasi:Affine}
X(\Psi)=\{\psi_{j,n,k}:\ 1\le j \le r; n\in\mathbb{Z},k\in\mathbb{Z}\},
\end{equation*}
where $\psi_{j,n,k}$ is defined by
\begin{equation*}\label{E:Quasi:Affine:1}
\psi_{j,n,k}:=\left\{\begin{array}{cc}
2^{\frac{n}2}\psi_{j}(2^n\cdot-k),&n\ge 0;\\
2^{n}\psi_{j}(2^n\cdot-2^{n}k),&n<0.
\end{array}\right.
\end{equation*}
The system $X(\Psi)$ is called a \emph{tight wavelet frame} of $L_2(\mathbb{R})$ if
\begin{equation*}
  f=\sum_{g\in X(\Psi) }\langle f,g\rangle g
\end{equation*}
holds for all $f\in L_2(\mathbb{R})$, where $\langle\cdot,\cdot\rangle$ is the inner product in $L_2(\mathbb{R})$. When $X(\Psi)$ forms a tight frame of $L_2(\mathbb{R})$, each function $\psi_{j}$, $j=1,\ldots, r$, is called a (tight) framelet and the whole system $X(\Psi)$ is called a tight wavelet frame.

The constructions of compactly supported and desirably (anti)symmetric framelets $\Psi$ are usually based on the multiresolution analysis (MRA) generated by some refinable function $\phi$ with refinement mask $a_0$ satisfying
\begin{equation*}\label{phimaskphi}
\phi=2\sum_{k\in\mathbb{Z}}{a_0[k]\phi(2\cdot-k)}.
\end{equation*}
The idea of an MRA-based construction of framelets $\Psi=\{\psi_1,\ldots, \psi_r\}$ is to find masks $a_j$, which are finite sequences (or filters), such that
\begin{equation}\label{psimaskphi}
\psi_{j} =
2\sum_{k\in\mathbb{Z}}{a_{j}[k]\phi(2\cdot-k)},\quad j=1,2,\ldots,r.
\end{equation}
The sequences $a_1,\ldots,a_r$ are called wavelet frame masks, or the high pass filters associated to the tight wavelet frame system, and $a_0$ is also known as the low pass filter.

The unitary extension principle (UEP) \cite{ron1997affine} provides a rather general characterization of MRA-based tight wavelet frames. Roughly speaking, as long as $\{a_1,\ldots,a_r\}$ are finitely supported and their Fourier series $\widehat a_j$ satisfy
\begin{equation}\label{UEPCondition}
\sum_{j=0}^{r}|\widehat{a}_{j}(\xi)|^2=1
\quad\text{and}\quad
\sum_{j=0}^{r}\widehat{a}_{j}(\xi)\overline{\widehat{a}_{j}(\xi+\pi)}=0,
\end{equation}
for all $\xi\in[-\pi,\pi]$, the quasi-affine system $X(\Psi)$ with $\Psi=\{\psi_1,\ldots, \psi_r\}$ defined by \eqref{psimaskphi} forms a tight frame of $L_2(\mathbb{R})$. Note that, some filters used in image restoration, such as those constructed in \cite{cai2014data,bao2014convergence} and some filter banks in \cite{DJS2013}, only satisfy the first condition of \eqref{UEPCondition}. In this case, the wavelet systems associated to these filter banks are not tight frames of $L_2(\mathbb{R})$ in general. However, these filter banks form tight frames for sequence space $\ell_2(\mathbb{Z})$ instead, which is sufficient for many image restoration problems.

In discrete setting, we denote $W$ as the fast decomposition transform and its adjoint $W^\top$ as the fast reconstruction transform. Both $W$ and $W^\top$ are formed by convolution operators with kernels $\{a_j\}_{j=0}^m$. Let $a$ be a filter in $\ell_2(\mathbb{Z})$. The convolution operator
$\mathcal{S}_a:\ell_2(\mathbb{Z})\rightarrow\ell_2(\mathbb{Z})$ associated to kernel $a$ is defined by
\begin{equation*}\label{}
[\mathcal{S}_au](n):=[a\ast u](n)=\sum_{k\in\mathbb{Z}}a(n-k)u(k).
\end{equation*}
Given a set of filters $\{a_j\}_{j=0}^m$, the associated analysis operator $W$ and its adjoint $W^\top$ are given by
\begin{equation}\label{E:Structure:W}
\begin{aligned}
& W=[\mathcal{S}^\top_{a_0(-\cdot)},\mathcal{S}^\top_{a_1(-\cdot)},\dots,\mathcal{S}^\top_{a_m(-\cdot)}]^\top,\\
& W^\top=[\mathcal{S}_{a_0},\mathcal{S}_{a_1},\dots,\mathcal{S}_{a_m}].
\end{aligned}
\end{equation}
It is not hard to verify that the filters $\{a_i\}_{j=0}^m$ satisfy the first condition of \eqref{UEPCondition} if and only if
\begin{equation}\label{uep2}
W^\top W=I.
\end{equation}

\subsection{Data-Driven Tight Frames}

This subsection is to briefly review the data-driven tight frames. Interested readers should refer to \cite{cai2014data,bao2014convergence} for details.

To learn a good tight frame $W$, taking the form of \eqref{E:Structure:W}, for a given image $u$, we solve the following optimization problem:
\begin{equation}\label{l0}
  \min_{v,W} \quad\lambda^2\|v\|_0+\|Wu-v\|_2^2,\quad W^\top W=I,
\end{equation}
where $\|\cdot\|_0$ is the $\ell_0$-``norm" that returns the number of non-zero entries of the input vector.

To solve \eqref{l0}, let us start with reformulating \eqref{l0}. Reshape all $N\times N$ patches of $u$ into vectors and put them together as column vectors of the matrix $G\in \mathbb{R}^{N^2\times p}$, where $p$ is the total number of patches. We put the filters $\{a_j\}_{j=0}^m$ associated to $W$ as column vectors of the matrix $D\in \mathbb{R}^{N^2\times m}$. For simplicity, we focus on the case $m=N^2$, i.e. $D\in\mathbb{R}^{N^2\times N^2}$. Denote $V\in\mathbb{R}^{N^2\times p}$ as the tight frame coefficients. So, we have
\begin{equation}\label{reformulate}
\begin{aligned}
  &G=(g_1,g_2,\dots,g_p)\in\mathbb{R}^{N^2\times p},\\
  &D=(a_1,a_2,\dots,a_{N^2})\in\mathbb{R}^{N^2\times N^2},\\
  &V=(v_1,v_2,\dots,v_p)\in\mathbb{R}^{N^2\times p}.
\end{aligned}
\end{equation}
Thus, the decomposition operation can be written as $V=D^TG$, and the reconstruction operation can be written as $\tilde{G}=DV$. The condition $W^\top W=I$ is satisfied whenever $DD^\top=I$.  Now, we rewrite \eqref{l0} as
\begin{equation}\label{dict}
  \begin{aligned}
  &\min_{V,D} &\lambda^2\|V\|_0+\|D^TG-V\|_2^2, \quad D D^\top=I.
  \end{aligned}
\end{equation}

In \cite{cai2014data,bao2014convergence}, an alternative optimization algorithm was proposed to solve the problem \eqref{dict} and its convergence analysis was later given in \cite{bao2014convergence}. What makes the algorithm efficient is that both the subproblems for solving $D$ and $V$ respectively have closed-form solutions that can be efficiently computed. This algorithm can be written as
\begin{equation}\label{learndict}
  \begin{aligned}
  & D^{k+1}=XY^\top,\\
  & V^{k+1}=\mathcal{T}_{\lambda}((D^{k+1})^TG),
  \end{aligned}
\end{equation}
where $X$ and $Y$ are obtained by taking SVD of $G(V^k)^\top$, i.e. $G(V^k)^\top=X\Sigma Y^\top$, and $\mathcal{T}_{\lambda}$ is the hard-thresholding operator defined by
\begin{equation}\label{D:HT}
  \left(\mathcal{T}_{\lambda}(V)\right)[i,j]=\left\{
  \begin{aligned}
  &0,\quad \text{if }|x|<\lambda,\\
  &\{0,V[i,j]\},\quad \text{if }|x|=\lambda,\\
  &V[i,j],\quad \text{otherwise.}
  \end{aligned}\right.
\end{equation}

\section{Models and Algorithms}\label{Sec:Model-Alg}

\subsection{CT Image Reconstruction Model}

We first introduce some basic notation. Denote $P_0$ as the projection operator (computed using Sidden's algorithm \cite{Siddon1985Fast}) with $N_P$ projections and $N_D$ detectors, and $f_0$ as the observed projection image. Suppose $f_0$ is supported on the grid $\Lambda$ of size $N_D\times N_P$, with each pixel value representing the data received from each detector at each projection angle. Given $f_0$, our objective is to reconstruct a projection image $f$ with less noise and higher angular resolution than $f_0$, together with its corresponding high quality CT image $u$ at the same time. Let $f$ be supported on the grid $\Omega\supset\Lambda$ of size $N_D\times \widetilde N_P$ with $\widetilde N_p>N_p$. For simplicity, we focus on the case $\widetilde N_p=2N_p$, i.e. we want to restore an $f$ from $f_0$ with doubled angular resolution.

To ensure a high quality reconstruction of both $u$ and $f$, we shall enforce sparsity based regularization on both of the variables. In \cite{Dong2013X}, sparsity regularization based on tight wavelet frames was used and their numerical experiments showed the advantage of recovering both $u$ and $f$ simultaneously over the classical approach where $f$ is fixed, i.e. setting $f=f_0$. In this paper, instead of using a pre-constructed system as sparse approximation to $u$ and $f$, we adopt the idea of data-driven tight frames of \cite{cai2014data} to actively learn the optimal sparse representation for $u$ and $f$ based on the given data $f_0$. Our spatial-Radon domain CT image reconstruction model based on data-driven tight frames (SRD-DDTF) reads as follows:
\begin{equation}\label{obj}
\begin{aligned}
  &\min_{f,u,v_1,W_1,v_2,W_2}\ \frac{1}{2}\|R_{\Lambda^C}(Pu-f)\|_2^2+\frac{1}{2}\|R_\Lambda Pu-f_0\|_2^2+\frac{\kappa}{2}\|R_\Lambda f-f_0\|_2^2\\
  &\hspace*{1.5in}+\lambda_1\|v_1\|_0+\frac{\mu_1}{2}\|W_1f-v_1\|_2^2+\lambda_2\|v_2\|_0+\frac{\mu_2}{2}\|W_2 u-v_2\|_2^2,\\
  &\hspace*{0.35in}\text{s.t. }\quad W_i^\top W_i=I,\ i=1,2.
\end{aligned}
\end{equation}
where $R_{\Lambda^C}$ denotes the restriction on $\Omega\setminus{\Lambda}$, and $R_{\Lambda}$ denotes the restriction on $\Lambda$.

The first two terms $\frac{1}{2}\|R_{\Lambda^C}(Pu-f)\|_2^2+\frac{1}{2}\|R_\Lambda Pu-f_0\|_2^2$ is to ensure that $Pu\approx f$ on $\Lambda^C$ and $Pu\approx f_0$ on $\Lambda$, while the third term $\frac{\kappa}{2}\|R_\Lambda f-f_0\|_2^2$ is to ensure with restriction on $\Lambda$, $f\approx f_0$. The reason that we are not using the simpler fidelity term $\frac12\|Pu-f\|_2^2$ to enforce $Pu\approx f$ is because $f$ is the estimated projection data which may not be as reliable as $f_0$ on $\Lambda$. Therefore, in the domain $\Lambda$ where the actual projection image $f_0$ is available, we should make sure that $R_\Lambda Pu\approx f_0$.

The transforms $W_1$ and $W_2$ are tight frames (due to the constraints $W_i^\top W_i=I, i=1,2$), with frame coefficients $v_1$ and $v_2$, that are learned from $u$ and $f$ respectively. The use of the $\ell_0$-``norm" is to enforce sparsity of $v_1$ and $v_2$ which in turn grants sparse approximation to $u$ and $f$ by the transforms $W_1$ and $W_2$. The special structure of $W_i$ given by \eqref{E:Structure:W} and the constraints $W_i^\top W_i=I$ make the dictionary learning component of \eqref{obj} different from the popular K-SVD method \cite{aharon2006k}, where neither of the aforementioned properties is guaranteed to be satisfied. These properties make the learning of $W_i$ much faster than the K-SVD method, because the size of the problem is much smaller, while the performance is still comparable to the K-SVD method. Another drawback of the K-SVD method is that the learned dictionary is not guaranteed to be complete in the underlying Euclidean space, i.e. $W_i^\top W_i\ne I$. We refer the interested readers to \cite{cai2014data} for more details on the comparison between data-driven tight frames and the K-SVD method.

\subsection{Alternative Optimization Algorithms}

Given a projected data $f_0$, we first solve the following analysis based model \cite{cai2009split,elad2005simultaneous,
starck2005image}
\begin{equation}\label{anwave}
  \min_u \frac{1}{2}\|P_0u-f_0\|_2^2+\lambda\|Wu\|_1
\end{equation}
to obtain an initial reconstruction $u^0$. Then, we let $f^0=Pu^0$ to be the initial estimation of the higher quality projection image. The initial estimations on the variables $v_1,W_1,v_2,W_2$ are obtained by solving the following problems
\begin{equation}\label{ini:v:W:1}
  \min_{v_1,W_1} \quad\lambda^2\|v_1\|_0+\|W_1u^0-v_1\|_2^2,\quad W_1^\top W_1=I
\end{equation}
and
\begin{equation}\label{ini:v:W:2}
  \min_{v_2,W_2} \quad\lambda^2\|v_2\|_0+\|W_2f^0-v_2\|_2^2,\quad W_2^\top W_2=I
\end{equation}
using algorithm \eqref{learndict}. After the initializations, we optimize the variables $f$,$u$,$\{W_1,W_2\}$,$\{v_1,v_2\}$ in the SRD-DDTF model \eqref{obj} alternatively and iterate until convergence. Full details of the proposed algorithm is given in Algorithm \ref{alg1}. Convergence analysis of the algorithm is given in the next subsection.

\begin{algorithm}[htp]
\footnotesize
\caption{Adaptive Frames Based CT Image Reconstruction}
\label{alg1}
\textbf{Step 1.} Initialization: Compute $u^0$ from \eqref{anwave} and set $f^0=Pu^0$. Compute $v_1^0,W_1^0,v_2^0,W_2^0$ from \eqref{ini:v:W:1} and \eqref{ini:v:W:2}.\\
\textbf{Step 2.} Main Loop:\\
\While{stopping criteria are not met}
{
  \textbf{(1) optimize $f$}\begin{equation}\label{minf}
    f^{k+1}\leftarrow\text{argmin}_{f}\ \frac{\kappa}{2}\|R_\Lambda f-f_0\|_2^2+\frac{1}{2}\|R_{\Lambda^C}(Pu^k-f)\|_2^2+\frac{\mu_1}{2}\|W_1^{k} f-v_1^{k}\|_2^2+\frac{a}{2}\|f-f^{k}\|_2^2
  \end{equation}
  \textbf{(2) optimize $u$}
 \begin{equation}\label{minu}
  u^{k+1}\leftarrow\text{argmin}_u\ \frac{1}{2}\|R_{\Lambda^C}(Pu-f^{k+1})\|_2^2+\frac{1}{2}\|R_\Lambda Pu-f_0\|_2^2+\frac{\mu_2}{2}\|W_2^{k}u-v_2^{k}\|_2^2+\frac{b}{2}\|u-u^{k}\|_2^2
  \end{equation}
    \textbf{(3) optimize $W_1,W_2$}
  \begin{equation}\label{minw1w2}
  \begin{split}
   &{W}_1^{k+1}\leftarrow\mbox{argmin}_{W_1^\top W_1=I}\ \frac{\mu_1}{2}\|W_1f^{k+1}-v_1^k\|_2^2+\frac{c_1}{2}\|W_1-W_1^k\|_2^2,\cr
   &{W}_2^{k+1}\leftarrow\mbox{argmin}_{W_2^\top W_2=I}\ \frac{\mu_2}{2}\|W_2u^{k+1}-v_2^k\|_2^2+\frac{c_2}{2}\|W_2-W_2^k\|_2^2
  \end{split}
  \end{equation}
  \textbf{(4) optimize $v_1,v_2$}
  \begin{equation}\label{minv1v2}
  \begin{split}
    &{v}_1^{k+1}\leftarrow\text{argmin}_{v_1}\ \lambda_1\|v_1\|_0+\frac{\mu_1}{2}\|W_1^{k+1}f^{k+1}-v_1\|_2^2+\frac{d_1}{2}\|v_1-v_1^k\|_2^2,\cr
    &{v}_2^{k+1}\leftarrow\text{argmin}_{v_2}\ \lambda_2\|v_2\|_0+\frac{\mu_2}{2}\|W_2^{k+1}u^{k+1}-v_2\|_2^2+\frac{d_2}{2}\|v_2-v_2^k\|_2^2
  \end{split}
  \end{equation}
  }
\end{algorithm}

Note that in step 2 of Algorithm \ref{alg1} where variables are updated alternatively, we added additional $\ell_2$ terms, $\frac{a}2\|f-f^k\|^2_2$, $\frac{b}2\|u-u^k\|^2_2$,$\frac{c_1}2\|W_1-W_1^k\|^2_2$, $\frac{c_2}2\|W_2-W_2^k\|^2_2$, $\frac{d_1}2\|v_1-v_1^k\|^2_2$,$\frac{d_2}2\|v_2-v_2^k\|^2_2$, so that we can theoretically justify the convergence of the algorithm. Numerically, however, Algorithm \ref{alg1} still converges with $a=b=c_1=c_2=d_1=d_2=0$.

Problem \eqref{minf} in Algorithm \ref{alg1} has the following closed-form solution:
\begin{equation*}
f^{k+1}=(R_{\Lambda^c}+\kappa R_\Lambda+(\mu_1+a)I)^{-1}(R_{\Lambda^c}Pu^k+\kappa R_\Lambda f_0+\mu_1{W_1^{k}}^Tv_1^k+af^k),
\end{equation*}
where $R_{\Lambda^c}+\mu_1R_\Lambda+(\mu_3+a)I$ is simply a diagonal matrix and hence no matrix inversion is needed. Problem \eqref{minu} also has a closed-form solution:
\begin{equation*}
u^{k+1}=(P^TP+(\mu_2+b)I)^{-1}(P^TR_{\Lambda^c}f^{k+1}+P^TR_\Lambda f_0+\mu_2{W_2^k}^\top{v_2}^k+bu^{k}),
\end{equation*}
which can be efficiently solved by the conjugate gradient method.

The updates on the variables $v_1,W_1,v_2,W_2$ can be implemented by reformulating the problem in the form \eqref{reformulate} and solving them by a variant algorithm of \eqref{learndict}\cite{bao2014convergence}. To be more specific, we first make the following reformulations:
\begin{equation}\label{}
  \begin{aligned}
  &\{f,v_1,W_1\}\quad \Leftrightarrow\quad\{F,V_1,D_1\},\\
  &\{u,v_2,W_2\}\quad \Leftrightarrow\quad\{U,V_2,D_2\}.\\
  \end{aligned}
\end{equation}
Thus, to solve problem \eqref{minw1w2}, we can simply compute
\begin{equation}\label{}
\left\{
\begin{aligned}
  &D_1^{k+1}=X_1Y_1^\top,\quad\text{where }X_1\Sigma_1 Y_1^\top=F^{k+1}(V_1^k)^\top+\frac{c_1}{\mu_1}D_1^k;\\
  &D_2^{k+1}=X_2Y_2^\top,\quad\text{where }X_2\Sigma_2 Y_2^\top=U^{k+1}(V_2^k)^\top+\frac{c_2}{\mu_2}D_2^k.
\end{aligned}\right.
\end{equation}
To solve problem \eqref{minv1v2}, we can simply compute
\begin{equation}\label{}
\begin{aligned}
  & V_1^{k+1}=\mathcal{T}_{\sqrt{2\lambda_1/(\mu_1+d_1)}}((\mu_1(D_1^{k+1})^TF^{k+1})+d_1V_1^k)/(\mu_1+d_1)),\\
  &V_2^{k+1}=\mathcal{T}_{\sqrt{2\lambda_2/(\mu_2+d_2)}}((\mu_2(D_2^{k+1})^TU^{k+1})+d_2V_2^k)/(\mu_2+d_2)),
\end{aligned}
\end{equation}
where $\mathcal{T}_a(\cdot)$ is the hard-thresholding operator defined by \eqref{D:HT}.

\subsection{Convergence Analysis}

In this subsection, we prove that under the bounded assumption, $\{f^k,u^k,W_1^k,W_2^k, v_1^k, v_2^k\}$ generated by Algorithm \ref{alg1} converges globally and the limit is a stationary point of the proposed model \eqref{obj}. Our convergence analysis has a similar structure as that of \cite{bao2014convergence}. Our analysis is also based on the recent work by \cite{Attouch2010Proximal,bolte2014proximal,xu2014globally}, where convergence of alternative optimization algorithms on nonconvex and nonsmooth functions is studied using Kurdyka-{\L}ojasiewicz (KL) property. However, our algorithm has four block coordinates ($f,u,\{W_1,W_2\},\{v_1,v_1\}$), which leads to a relatively more complicated convergence analysis than that of \cite{bao2014convergence,Attouch2010Proximal} where algorithm of two blocks was analyzed.

Our convergence analysis is based on the following assumption:
\begin{assu}\label{ass}
The sequence $\{u^k,f^k\}$ generated by algorithm \ref{alg1} is bounded.
\end{assu}

Under this assumption, we will prove:
\begin{enumerate}
  \item Global convergence of $\{f^k,u^k,W_1^k,W_2^k, v_1^k, v_2^k\}$ using KL property;
  \item The limit is a stationary point of the SRD-DDTF model \eqref{obj}.
\end{enumerate}

We start with some basic notation and definitions.

\begin{defn}\textsc{(Critical Point)}
Let $f: \mathbb{R}^n\mapsto \mathbb{R}\cup\{\pm\infty\}$ be a proper and lower semi-continuous function.
\begin{enumerate}
  \item The Fr\'{e}chet subdifferential of $f$ at $x$ is defined by
      \begin{equation}\label{}
        \partial_Ff(x):=\{w\in \mathbb{R}^n:\liminf_{y\rightarrow x}\frac{f(y)-f(x)-\langle w,y-x\rangle}{\|y-x\|}\geqslant 0\}
       \end{equation}
       for any $x$ with $|f(x)|<\infty$ and $\partial_Ff(x)=\emptyset$ if $|f(x)|=\infty$.  Denote $\partial_F f=\{x:\partial_F f(x)\neq\emptyset\}$.
  \item The limiting-subdifferential(or simply subdifferential) of $f$ at $x$ is defined by
      \begin{equation}\label{}
        \partial f(x):=\{u\in \mathbb{R}^n: \exists x_n\rightarrow x,f(x_n)\rightarrow f(x), u_n\in\partial_F f(x_n)\rightarrow u,n\rightarrow\infty\}.
      \end{equation}
      Denote $\partial f=\{x:\partial f(x)\neq\emptyset\}$.
  \item For each $x\in \text{dom}f$, $x$ is called the \textbf{stationary point} of f if it satisfies $0\in\partial_Ff(x)$.
\end{enumerate}
\end{defn}

\begin{rk}
Our definition of stationary point is the same with the one used in \cite{bao2014convergence}, which is stronger than the definition used by \cite{Attouch2010Proximal,bolte2014proximal}.
\end{rk}

\begin{defn}\textsc{(Kurdyka-{\L}ojasiewicz Property)}\cite{kurdyka1998,lojasiewicz1993geometrie}.
The function $f$ is said to have the Kurdyka-{\L}ojasiewicz Property at $x\in\partial f $ if there exist $\eta\in(0,+\infty]$, a neighborhood $U$ of $x$ and a continuous concave function $\varphi:[0,\eta)\rightarrow [0,+\infty]$, such that:
\begin{itemize}
  \item $\varphi(0)=0,$
  \item $\varphi\in C^1\left((0,\eta)\right),$
  \item $\varphi'(z)>0,\forall z\in(0,\eta),$
  \item and $\forall x'\in U\cap \{x':f(x)<f(x')<f(x)+\eta\},$ the Kurdyka-{\L}ojasiewicz inequality holds:
      \begin{equation}\label{KL}
        \varphi'(f(x')-f(x))\text{dist}(0,\partial f(x'))\geq1.
      \end{equation}
\end{itemize}
If for all $x\in \partial f$ KL property holds, then $f$ is called a KL function.
\end{defn}

Suppose the patches of $f$ have size $n\times n$, and the patches of $u$ have size $m\times m$. For simplicity, define $\mathcal{D}_1=\{W\in R^{n^2\times n^2}:W^\top W=I_{n^2}\}$ and $\mathcal{D}_2=\{W\in R^{m^2\times m^2}:W^\top W=I_{m^2}\}$. Define
\begin{equation*}\label{}
\begin{aligned}
  Q(f,u,W_1,W_2, v_1, v_2)=&\frac{1}{2}\|R_{\Lambda^C}(Pu-f)\|_2^2+\frac{1}{2}\|R_\Lambda Pu-f_0\|_2^2+\frac{\kappa}{2}\|R_\Lambda f-f_0\|_2^2\\
  &\hspace*{2.0in}+\frac{\mu_1}{2}\|W_1 f-v_1\|_2^2+\frac{\mu_2}{2}\|W_2u-v_2\|_2^2,
\end{aligned}
\end{equation*}
and
\begin{equation*}
f_1(v_1)=\lambda_1\|v_1\|_0,\quad f_2(v_2)=\lambda_2\|v_2\|_0,\quad g_1(W_1)=I_{\mathcal{D}_1}(W_1),\quad g_2(W_2)=I_{\mathcal{D}_2}(W_2),
\end{equation*}
where $I_{\mathcal{D}}(W)=0$, if $W\in\mathcal{D}$ and $+\infty$ otherwise.

Then, problem \eqref{obj} can be reformulated as
\begin{equation}\label{obj2}
  \min_{f,u,W_1,W_2, v_1, v_2} F(f,u,W_1,W_2, v_1, v_2):=f_1(v_1)+f_2(v_2)+g_1(W_1)+g_2(W_2)+Q(f,u,W_1,W_2, v_1, v_2).
\end{equation}
It is clear that $f_1,f_2$ are lower semi-continuous. Since $\mathcal{D}_1,\mathcal{D}_2$ are compact, $g_1,g_2$ are also lower semi-continuous. For convenience, let $X^k=(f^k,u^k,W_1^k,W_2^k, v_1^k, v_2^k), Q^k=Q(X^k),F^k=F(X^k).$

\begin{lem}\label{lem1}
Denote $l=\min\{a,b,c_1,c_2,d_1,d_2\}$. We have
\begin{equation*}
\frac{l}{2}\|X^{k+1}-X^k\|_2^2\leqslant F^{k}-F^{k+1} \quad\mbox{and}\quad \|X^{k+1}-X^{k}\|_2\rightarrow 0, k\rightarrow \infty,
\end{equation*}
and the sequence $\{F^k\}$ is bounded and monotonically decreases to a limit point.
\end{lem}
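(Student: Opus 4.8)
The plan is to establish the so-called sufficient decrease property by exploiting the proximal structure of each of the four subproblems in Algorithm \ref{alg1}. The key observation is that each update was deliberately augmented with a quadratic proximal term, so every subproblem minimizes an objective of the form $\Phi(\cdot) + \frac{\sigma}{2}\|\cdot - (\cdot)^k\|_2^2$ where $\Phi$ is the relevant partial sum of $F$ with the other block coordinates frozen. Since the new iterate is a minimizer, comparing its objective value against that of the previous iterate yields $\Phi(\text{new}) + \frac{\sigma}{2}\|\text{new}-\text{old}\|_2^2 \le \Phi(\text{old})$, i.e. $\Phi(\text{old}) - \Phi(\text{new}) \ge \frac{\sigma}{2}\|\text{new}-\text{old}\|_2^2$. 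This holds verbatim even though $\Phi$ is nonconvex and nonsmooth (the $\ell_0$ terms and the indicator functions $g_1,g_2$ are harmless here, since we only use the defining inequality of a global minimizer, not first-order optimality).

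The steps, in order, would be as follows. First, for the $f$-update \eqref{minf}: with $u=u^k$, $W_1=W_1^k$, $v_1=v_1^k$, $v_2=v_2^k$, $W_2=W_2^k$ fixed, $f^{k+1}$ minimizes $F(f,u^k,W_1^k,W_2^k,v_1^k,v_2^k) + \frac{a}{2}\|f-f^k\|_2^2$ (note the terms of $F$ not depending on $f$ merely shift the objective by a constant, and $f_1,f_2,g_1,g_2$ contribute constants in this step). Hence
\begin{equation*}
F(f^{k+1},u^k,W_1^k,W_2^k,v_1^k,v_2^k) + \tfrac{a}{2}\|f^{k+1}-f^k\|_2^2 \le F(f^k,u^k,W_1^k,W_2^k,v_1^k,v_2^k) = F^k,
\end{equation*}
so $F^k - F(f^{k+1},u^k,\ldots) \ge \frac{a}{2}\|f^{k+1}-f^k\|_2^2$. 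Repeat the identical argument for the $u$-update (gain $\frac{b}{2}\|u^{k+1}-u^k\|_2^2$), the joint $W_1,W_2$-update (gain $\frac{c_1}{2}\|W_1^{k+1}-W_1^k\|_2^2 + \frac{c_2}{2}\|W_2^{k+1}-W_2^k\|_2^2$; here the constraint sets are respected because $W_i^k,W_i^{k+1}\in\mathcal{D}_i$, so the indicator terms cancel), and finally the $v_1,v_2$-update (gain $\frac{d_1}{2}\|v_1^{k+1}-v_1^k\|_2^2 + \frac{d_2}{2}\|v_2^{k+1}-v_2^k\|_2^2$). Telescoping these four inequalities across one full cycle $X^k \to X^{k+1}$ gives
\begin{equation*}
F^k - F^{k+1} \ge \tfrac{a}{2}\|f^{k+1}-f^k\|_2^2 + \tfrac{b}{2}\|u^{k+1}-u^k\|_2^2 + \tfrac{c_1}{2}\|W_1^{k+1}-W_1^k\|_2^2 + \tfrac{c_2}{2}\|W_2^{k+1}-W_2^k\|_2^2 + \tfrac{d_1}{2}\|v_1^{k+1}-v_1^k\|_2^2 + \tfrac{d_2}{2}\|v_2^{k+1}-v_2^k\|_2^2 \ge \tfrac{l}{2}\|X^{k+1}-X^k\|_2^2,
\end{equation*}
which is the first claimed inequality.

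From here the remaining assertions are routine. Monotonicity $F^{k+1}\le F^k$ is immediate since $l>0$ and the right side is nonnegative. For boundedness from below, note $F\ge 0$ always (every term in $F$ is a squared norm or an $\ell_0$ count or an indicator, all nonnegative), so the decreasing sequence $\{F^k\}$ is bounded below by $0$ and therefore converges to some limit $F^\ast\ge 0$; being convergent, it is bounded. Finally, summing the sufficient-decrease inequality over $k=0,\ldots,K$ telescopes on the left to $F^0 - F^{K+1} \le F^0 - F^\ast < \infty$, giving $\sum_{k=0}^{\infty}\frac{l}{2}\|X^{k+1}-X^k\|_2^2 \le F^0 - F^\ast < \infty$, hence the summand tends to zero and $\|X^{k+1}-X^k\|_2 \to 0$ as $k\to\infty$. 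I do not anticipate a genuine obstacle in this lemma; the only point requiring a little care is making sure, at each of the four sub-steps, that the "extra" terms of $F$ (those not involving the block being updated) are truly constant — in particular that the $\ell_0$ and indicator terms drop out cleanly — and that the constrained minimizations over $\mathcal{D}_1,\mathcal{D}_2$ still permit the comparison with the feasible previous point $W_i^k$. Both are straightforward given the explicit form of $F$ in \eqref{obj2} and the fact that $W_i^k\in\mathcal{D}_i$ for all $k$ by construction.
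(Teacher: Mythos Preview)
Your proposal is correct and follows essentially the same approach as the paper: exploit the defining minimizer inequality of each proximal subproblem (comparing the new iterate against the feasible previous iterate), telescope the four resulting inequalities across one full cycle to obtain the sufficient-decrease estimate, and then use nonnegativity of $F$ to conclude monotone convergence of $\{F^k\}$ and summability of $\|X^{k+1}-X^k\|_2^2$. The paper compresses the four sub-step comparisons into a single chain of inequalities \eqref{E:lem4:1}, but the content is identical to your step-by-step telescoping.
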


\begin{proof}
According to Algorithm \ref{alg1}, we have
\begin{equation}\label{E:lem4:1}
\begin{aligned}
  &F(f^{k+1},u^{k+1},W_1^{k+1},W_2^{k+1}, v_1^{k+1}, v_2^{k+1})\\
  \leq \quad &F(f^{k+1},u^{k+1},W_1^{k+1},W_2^{k+1}, v_1^{k+1}, v_2^{k+1})+\frac{a}{2}\|f^{k+1}-f^k\|_2^2+\frac{b}{2}\|u^{k+1}-u^k\|_2^2\\
  &+\frac{c_1}{2}\|W_1^{k+1}-W_1^k\|_2^2+\frac{c_2}{2}\|W_2^{k+1}-W_2^k\|_2^2+\frac{d_1}{2}\|v_1^{k+1}-v_1^k\|_2^2+\frac{d_2}{2}\|v_2^{k+1}-v_2^k\|_2^2\\
  \leq \quad&F(f^{k},u^{k},W_1^{k},W_2^{k}, v_1^{k}, v_2^{k})\\
  \leq\quad&\dots\leq F(f^{0},u^{0},W_1^{0},W_2^{0},v_1^{0},v_2^{0}).\\
\end{aligned}
\end{equation}
Therefore, the sequence $\{F^k\geq0\}$ is bounded and monotonically decreasing, thus convergent. The second inequality of \eqref{E:lem4:1} gives us
\begin{align*}
&\frac{a}{2}\|f^{k+1}-f^k\|_2^2+\frac{b}{2}\|u^{k+1}-u^k\|_2^2+\frac{c_1}{2}\|W_1^{k+1}-W_1^k\|_2^2+\frac{c_2}{2}\|W_2^{k+1}-W_2^k\|_2^2\\
&+\frac{d_1}{2}\|v_1^{k+1}-v_1^k\|_2^2+\frac{d_2}{2}\|v_2^{k+1}-v_2^k\|_2^2
\quad \leq \quad F^k-F^{k+1},\\
\Rightarrow \quad&\frac{l}{2}\|X^{k+1}-X^k\|_2^2\quad \leq \quad F^k-F^{k+1},
\end{align*}
which leads to
$$\frac{l}{2}\sum_{i=0}^k\|X^{i+1}-X^i\|_2^2\leq F^0-F^{k+1}\leq F^0, \forall k.$$
Thus, $\sum_{k=0}^\infty\|X^{k+1}-X^k\|_2^2$ are finite and $ \|X^{k+1}-X^{k}\|_2\rightarrow 0, k\rightarrow \infty.$
\end{proof}

\begin{lem}\label{lem2}
Based on Assumption \ref{ass}, the sequence $X^k=\{f^k,u^k,W_1^k,W_2^k, v_1^k, v_2^k\}$ generated by algorithm \ref{alg1} is bounded and thus has a convergent subsequence.
\end{lem}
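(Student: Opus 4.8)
The plan is to use Lemma~\ref{lem1} together with the structural constraints of the model to bound the remaining four blocks once $\{u^k,f^k\}$ is known to be bounded by Assumption~\ref{ass}. First I would dispose of the frame variables, which require essentially no argument: by construction the updates \eqref{minw1w2} (and the initializations \eqref{ini:v:W:1}, \eqref{ini:v:W:2}) keep $W_1^k\in\mathcal{D}_1$ and $W_2^k\in\mathcal{D}_2$ for every $k$, and the constraint $W^\top W=I$ forces $\|W\|_F^2=\mathrm{tr}(W^\top W)=n^2$ (resp.\ $m^2$); since $\mathcal{D}_1,\mathcal{D}_2$ are moreover closed (they are cut out by polynomial equations), they are compact, so $\{W_1^k\}$ and $\{W_2^k\}$ are automatically bounded.

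Next I would bound the coefficient variables $v_1^k,v_2^k$ using the energy estimate from Lemma~\ref{lem1}. In the decomposition $F=f_1+f_2+g_1+g_2+Q$ every summand is nonnegative: $f_1,f_2$ are nonnegative multiples of the $\ell_0$ count, $g_1,g_2$ are indicator functions of $\mathcal{D}_1,\mathcal{D}_2$, and $Q$ is a sum of squared Euclidean norms (with $\kappa,\mu_1,\mu_2>0$). Hence $Q^k\le F^k\le F^0$ for all $k$ by Lemma~\ref{lem1}, and in particular $\frac{\mu_1}{2}\|W_1^kf^k-v_1^k\|_2^2\le F^0$. Because $(W_1^k)^\top W_1^k=I$ makes $W_1^k$ an isometry, $\|W_1^kf^k\|_2=\|f^k\|_2$, which is bounded by Assumption~\ref{ass}; the triangle inequality then gives $\|v_1^k\|_2\le\|f^k\|_2+\sqrt{2F^0/\mu_1}$, so $\{v_1^k\}$ is bounded. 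The identical argument with $\mu_2$, $W_2^k$ and $u^k$ bounds $\{v_2^k\}$.

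Finally, combining these bounds with Assumption~\ref{ass}, the whole sequence $X^k=(f^k,u^k,W_1^k,W_2^k,v_1^k,v_2^k)$ lies in a bounded subset of a finite-dimensional Euclidean space, so the Bolzano--Weierstrass theorem yields a convergent subsequence, completing the proof. I do not anticipate any genuine obstacle: the only point that needs care is noticing that the descent inequality of Lemma~\ref{lem1} (giving $F^k\le F^0$) together with the nonnegativity of all the summands of $F$ is exactly what controls $\|W_1^kf^k-v_1^k\|_2$ and hence $\|v_1^k\|_2$; everything else is immediate from the constraints $W_i^\top W_i=I$ and the standing boundedness hypothesis on $\{u^k,f^k\}$.
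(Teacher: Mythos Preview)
Your proposal is correct and follows essentially the same approach as the paper: boundedness of $W_i^k$ from the orthogonality constraint, then boundedness of $v_i^k$ from the descent inequality $F^k\le F^0$ (Lemma~\ref{lem1}) via the triangle inequality applied to $\|W_i^k f^k - v_i^k\|_2$. Your write-up is in fact slightly more detailed than the paper's (you make explicit the nonnegativity of the summands of $F$, the isometry property $\|W_1^k f^k\|_2=\|f^k\|_2$, and the appeal to Bolzano--Weierstrass), but the argument is the same.
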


\begin{proof}
By Assumption \ref{ass}, the sequence $\{u^k,f^k\}$ is bounded. Also, $W_1^{k}\in\mathcal{D}_1,W_2^{k}\in\mathcal{D}_2$ are bounded too. Combining with Lemma \ref{lem1}, we have
\begin{equation*}
  \|v_1^k\|_2-\|W_1^kf^k\|_2\leq\|W_1^kf-v_1^k\|_2\leq\sqrt{\frac{2F^k}{\mu_1}}\leq\sqrt{\frac{2F^0}{\mu_1}},\quad\forall k,
\end{equation*}
hence $v_1^k$ is also bounded. Similarly, $v_2^k$ is bounded.  Thus, $X^k$ is bounded and has convergent subsequence.
\end{proof}

\begin{lem}\label{lem3}
Define
\begin{equation}\label{Ak}
\begin{aligned}
A^{k}=&\big(-a\left(f^{k}-f^{k-1}\right),-b\left(u^{k}-u^{k-1}\right),-c_1(W_1^{k}-W_1^{k-1}),-c_2(W_2^{k}-W_2^{k-1}),\\
&\hspace*{3.0in} -d_1(v_1^{k}-v_1^{k-1}),-d_2(v_2^{k}-v_2^{k-1})\big)\\
&+(Q_f(f^{k},u^{k},W_1^{k},W_2^{k},v_1^{k},v_2^{k})-Q_f(f^{k},u^{k-1},W_1^{k-1},W_2^{k-1},v_1^{k-1},v_2^{k-1}),0,0,0,0,0)\\
&+(0,Q_u(f^{k},u^{k},W_1^{k},W_2^{k},v_1^{k},v_2^{k})-Q_u(f^{k},u^{k},W_1^{k-1},W_2^{k-1},v_1^{k-1},v_2^{k-1}),0,0,0,0)\\
&+(0,0,Q_{W_1}(f^{k},u^{k},W_1^{k},W_2^{k},v_1^{k},v_2^{k})-Q_{W_1}(f^{k},u^{k},W_1^{k},W_2^{k},v_1^{k-1},v_2^{k-1}),0,0,0)\\
&+(0,0,0,Q_{W_2}(f^{k},u^{k},W_1^{k},W_2^{k},v_1^{k},v_2^{k})-Q_{W_2}(f^{k},u^{k},W_1^{k},W_2^{k},v_1^{k-1},v_2^{k-1}),0,0).\\
\end{aligned}
\end{equation}
Then $A_k\in \partial F^k$ and there exist a positive constant $M$ such that $\|A^k\|_2\leqslant M\|X^k-X^{k-1}\|_2$.
\end{lem}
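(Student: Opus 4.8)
The plan is to derive $A^k\in\partial F^k$ directly from the first-order optimality conditions of the four subproblems in the main loop of Algorithm \ref{alg1}, and then to control $\|A^k\|_2$ using local Lipschitz continuity of $\nabla Q$, which is available because $Q$ is a quadratic polynomial and, by Assumption \ref{ass} and Lemma \ref{lem2}, the iterates stay bounded.

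First I would record, after shifting the iteration index so that the loop produces $X^k$ from $X^{k-1}$, the optimality conditions of \eqref{minf}--\eqref{minv1v2}. The $f$- and $u$-subproblems are smooth (their objectives equal, up to additive constants, $Q$ with the indicated blocks frozen, plus a proximal term), so Fermat's rule gives
\begin{align*}
&Q_f(f^k,u^{k-1},W_1^{k-1},W_2^{k-1},v_1^{k-1},v_2^{k-1})+a(f^k-f^{k-1})=0,\\
&Q_u(f^k,u^k,W_1^{k-1},W_2^{k-1},v_1^{k-1},v_2^{k-1})+b(u^k-u^{k-1})=0.
\end{align*}
For the $W_1$-, $W_2$-, $v_1$-, $v_2$-subproblems, the objective is a $C^1$ function (a piece of $Q$ plus a proximal term) plus the lower semicontinuous $g_1,g_2,f_1,f_2$ respectively; the sum rule for the limiting subdifferential (exact, since the other summand is $C^1$) gives, e.g.,
\begin{align*}
&-c_1(W_1^k-W_1^{k-1})-Q_{W_1}(f^k,u^k,W_1^k,W_2^k,v_1^{k-1},v_2^{k-1})\in\partial g_1(W_1^k),\\
&-d_1(v_1^k-v_1^{k-1})-Q_{v_1}(f^k,u^k,W_1^k,W_2^k,v_1^k,v_2^k)\in\partial f_1(v_1^k),
\end{align*}
and analogously for $W_2^k$ (updated using the already-refreshed $u^k$) and $v_2^k$; here I use repeatedly that $Q_f,Q_{W_1},Q_{v_1}$ depend only on $(f,W_1,v_1)$ and $Q_u,Q_{W_2},Q_{v_2}$ only on $(u,W_2,v_2)$, so the ``phantom'' block arguments written above are harmless.

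Next I would invoke the product formula for the subdifferential of a block-separable nonsmooth function plus a $C^1$ function,
$$\partial F(X^k)=\{(Q_f(X^k),\,Q_u(X^k),\,Q_{W_1}(X^k)+s_1,\,Q_{W_2}(X^k)+s_2,\,Q_{v_1}(X^k)+t_1,\,Q_{v_2}(X^k)+t_2):\ s_i\in\partial g_i,\ t_i\in\partial f_i\},$$
and then rearrange each optimality condition: adding $Q_f(X^k)$ to the first identity (and moving $Q_f(f^k,u^{k-1},\dots)$ to the other side), adding $Q_u(X^k)$ to the second, and adding $Q_{W_i}(X^k)$, $Q_{v_i}(X^k)$ to the four inclusions, reproduces exactly the six components of $A^k$ in \eqref{Ak}. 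In particular the $v_1$- and $v_2$-components of $A^k$ collapse to the pure proximal terms $-d_i(v_i^k-v_i^{k-1})$, because the $v_i$-subproblem is solved at the current iterates $(f^k,W_1^k,v_1^k)$ resp. $(u^k,W_2^k,v_2^k)$ and no gradient-difference correction is needed there. This establishes $A^k\in\partial F^k$.

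Finally, for the estimate: the iterates lie in a bounded set $B$, and since $Q$ is quadratic its gradient $\nabla Q$ (hence each partial gradient $Q_f,Q_u,Q_{W_1},Q_{W_2}$) is Lipschitz on $B$ with some constant $L$; thus each of the four bracketed gradient differences in \eqref{Ak} is bounded by $L\|X^k-X^{k-1}\|_2$, the six proximal vectors together contribute at most $\max\{a,b,c_1,c_2,d_1,d_2\}\|X^k-X^{k-1}\|_2$, and the triangle inequality yields $\|A^k\|_2\le M\|X^k-X^{k-1}\|_2$ with $M=\max\{a,b,c_1,c_2,d_1,d_2\}+4L$. I expect the main obstacle to be bookkeeping rather than analysis: one must track precisely which block variables are stale versus updated in each subproblem of Algorithm \ref{alg1}, and write the two constrained subproblems' optimality conditions correctly through the indicators $g_i$, so that the compensating gradient differences line up with \eqref{Ak} term by term; the only genuine analytic ingredient, local Lipschitz continuity of $\nabla Q$, is immediate from boundedness of the iterates and the polynomial form of $Q$.
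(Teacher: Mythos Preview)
Your proposal is correct and follows essentially the same route as the paper: write the first-order optimality conditions of each subproblem, rearrange to exhibit $A^k$ as an element of $\partial F(X^k)$ via the sum rule (smooth $Q$ plus block-separable $f_i,g_i$), and bound the gradient-difference terms by local Lipschitz continuity of $\nabla Q$ on the bounded iterate set. The paper records the constant as $M=\sqrt{6}\max\{a,b,c_1,c_2,d_1,d_2\}+4L$; your $M=\max\{a,b,c_1,c_2,d_1,d_2\}+4L$ is in fact the sharper (and correct) bound.

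Two small inaccuracies in your write-up, neither fatal: (i) $Q$ is not quadratic---the bilinear products $W_1f$ and $W_2u$ make it a degree-$4$ polynomial---but a polynomial still has locally Lipschitz gradient on bounded sets, which is all you use; (ii) your aside that ``$Q_f$ depends only on $(f,W_1,v_1)$'' is false because of the coupling term $\tfrac12\|R_{\Lambda^C}(Pu-f)\|_2^2$, which makes $Q_f$ depend on $u$ as well. Fortunately you never actually use that independence claim: the gradient-difference correction in the $f$-component of $A^k$ already carries the change $u^{k-1}\to u^k$, so the bookkeeping goes through unchanged once you drop the remark.
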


\begin{proof}
From the Algorithm \ref{alg1}, we have
\begin{align*}
&0\in a(f^{k}-f^{k-1})+Q_f(f^{k},u^{k-1},W_1^{k-1},W_2^{k-1},v_1^{k-1},v_2^{k-1});\\
&0\in b(u^{k}-u^{k-1})+Q_u(f^{k},u^{k},W_1^{k-1},W_2^{k-1},v_1^{k-1},v_2^{k-1});\\
&0\in c_1(W_1^{k}-W_1^{k-1})+\partial g_1(W_1^k)+Q_{W_1}(f^{k},u^{k},W_1^{k},W_2^{k},v_1^{k-1},v_2^{k-1});\\
&0\in c_2(W_2^{k}-W_2^{k-1})+\partial g_2(W_2^k)+Q_{W_2}(f^{k},u^{k},W_1^{k},W_2^{k},v_1^{k-1},v_2^{k-1});\\
& 0\in d_1(v_1^k-v_1^{k-1})+\partial f_1(v_1^k)+Q_{v_1}(f^{k},u^{k},W_1^{k},W_2^{k},v_1^{k},v_2^{k});\\
& 0\in d_2(v_2^k-v_2^{k-1})+\partial f_2(v_2^k)+Q_{v_2}(f^{k},u^{k},W_1^{k},W_2^{k},v_1^{k},v_2^{k}).
\end{align*}
Together with \eqref{Ak}, it is easy to see that $A_k\in\partial F^k$. Note that under the bounded setting, $\partial Q$ is Lipschitz continuous. Letting its Lipschitz constant be $L$, we have $\|A^k\|_2\leqslant M\|X^k-X^{k-1}\|_2$ where $M=\sqrt{6}\max\{a,b,c_1,c_2,d_1,d_2\}+4L.$
\end{proof}

It has been proved in \cite{bao2014convergence} that $f_1,f_2,g_1,g_2$ are KL functions and $Q$ is a polynomial function, hence also a KL function. Therefore, our object function $F$ is a KL function. Then, we have the following theorem.

\begin{thm}\label{thm1}
Under the Assumption \ref{ass}, denote a point $\bar{X}=\{\bar{f},\bar{u},\bar{W}_1,\bar{W}_2,\bar{v}_1,\bar{v}_2\}$ and its  neighborhood $U$, $\eta$ and  continuous concave function $\varphi$ for the consideration of KL property of $F$ at $\bar{X}$, that is
$$
\varphi'(F(X)-F(\bar{X}))dist(0,\partial F(X))\geq 1,\quad \forall X\in U\bigcap\{X| F(\bar{X})< F(X)<F(\bar{X})+\eta\}.
$$
Denote a sequence $\{X^k\}$ generated by Algorithm \ref{alg1} starting from $X^0$ and a constant $r>0$ with $B(\bar{X},r)\subset U$, we assume that
 \begin{equation*}
\bar{F}<F^k<\bar{F}+\eta,
\end{equation*}
and
\begin{equation}\label{X0}
2\sqrt{\frac{2}{l}(F^0-\bar{F})}+\frac{2M}{l}\phi(F^0-\bar{F})+\|X^0-\bar{X}\|_2<r
\end{equation}
where $l$ is the constant in Lemma \ref{lem1} and $M$ is the constant in Lemma \ref{lem3}. Then, we conclude that
\begin{enumerate}
\item $X^k\in B(\bar{X},r),\quad k\geq1$;
\item $\sum_{k=1}^\infty \|X^{k+1}-X^{k}\|_2<\infty$, which means $\{X^k\}$ is a Cauchy sequence and converges globally.
\end{enumerate}
\end{thm}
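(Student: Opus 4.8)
The plan is to follow the by-now standard Kurdyka--{\L}ojasiewicz argument for descent methods (cf. \cite{Attouch2010Proximal,bolte2014proximal,bao2014convergence}), carrying claims (1) and (2) together in a single induction on $k$. All three ingredients are already available: the sufficient-decrease estimate $\frac{l}{2}\|X^{k+1}-X^k\|_2^2\le F^k-F^{k+1}$ from Lemma \ref{lem1}; the relative-error bound $A^k\in\partial F^k$ with $\|A^k\|_2\le M\|X^k-X^{k-1}\|_2$ from Lemma \ref{lem3}; and the fact that $F$ is a KL function, with the data $\bar X$, $U$, $\eta$, $\varphi$ as in the statement.

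First I would record the one-step consequence of combining these ingredients. Write $\Delta_k=\|X^k-X^{k-1}\|_2$, $\bar F=F(\bar X)$, and $\delta_k=\varphi(F^k-\bar F)-\varphi(F^{k+1}-\bar F)$. As long as $X^k\in U$ and $\bar F<F^{k+1}<F^k<\bar F+\eta$, concavity of $\varphi$ gives $\delta_k\ge\varphi'(F^k-\bar F)(F^k-F^{k+1})$; the KL inequality at $X^k$ combined with $\|A^k\|_2\le M\Delta_k$ gives $\varphi'(F^k-\bar F)\ge 1/\mathrm{dist}(0,\partial F^k)\ge 1/(M\Delta_k)$; and Lemma \ref{lem1} gives $F^k-F^{k+1}\ge\frac{l}{2}\Delta_{k+1}^2$. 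Chaining these yields $\Delta_{k+1}^2\le\frac{2M}{l}\,\Delta_k\,\delta_k$, whence by the arithmetic--geometric mean inequality
\begin{equation*}
\Delta_{k+1}\le\tfrac12\Delta_k+\tfrac{M}{l}\,\delta_k .
\end{equation*}
(The edge case $\Delta_{k+1}=0$ is trivial, and if $F^k=\bar F$ for some $k$ then monotonicity forces $F^j=\bar F$ for all $j\ge k$, and Lemma \ref{lem1} then forces $X^{j+1}=X^j$, so the sequence is eventually stationary; henceforth assume $F^k>\bar F$.) Summing the displayed recursion from $k=1$ to $N$, absorbing $\tfrac12\sum_{k=1}^N\Delta_{k+1}$ into the left-hand side, and telescoping $\sum_{k=1}^N\delta_k\le\varphi(F^1-\bar F)\le\varphi(F^0-\bar F)$ (using $\varphi\ge0$ increasing and $F^1\le F^0$), I obtain the key bound $\sum_{k=1}^N\Delta_{k+1}\le\Delta_1+\tfrac{2M}{l}\varphi(F^0-\bar F)$, valid whenever all of $X^1,\dots,X^N$ lie in $U$.

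The induction then establishes claim (1). For $k=1$: by Lemma \ref{lem1}, $\Delta_1\le\sqrt{\tfrac2l(F^0-F^1)}\le\sqrt{\tfrac2l(F^0-\bar F)}$, so $\|X^1-\bar X\|_2\le\Delta_1+\|X^0-\bar X\|_2<r$ by \eqref{X0}. Assuming $X^1,\dots,X^k\in B(\bar X,r)\subset U$, the displayed one-step estimate is legitimate at the indices $1,\dots,k$, so the telescoped bound applies and
\begin{equation*}
\|X^{k+1}-\bar X\|_2\le\|X^0-\bar X\|_2+\sum_{i=0}^{k}\Delta_{i+1}\le\|X^0-\bar X\|_2+2\Delta_1+\tfrac{2M}{l}\varphi(F^0-\bar F)\le\|X^0-\bar X\|_2+2\sqrt{\tfrac2l(F^0-\bar F)}+\tfrac{2M}{l}\varphi(F^0-\bar F)<r ,
\end{equation*}
again by \eqref{X0} (here $\phi$ in \eqref{X0} is the KL function $\varphi$); hence $X^{k+1}\in B(\bar X,r)$, closing the induction. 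With $X^k\in B(\bar X,r)$ for all $k$ in hand, the telescoped bound $\sum_{k=1}^N\Delta_{k+1}\le\Delta_1+\tfrac{2M}{l}\varphi(F^0-\bar F)$ now holds for every $N$; letting $N\to\infty$ gives $\sum_{k=1}^\infty\|X^{k+1}-X^k\|_2<\infty$, so $\{X^k\}$ is Cauchy in a finite-dimensional space, hence convergent, which is claim (2).

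The main obstacle is precisely the circular dependence between the two claims: the KL inequality may only be invoked at iterates already known to lie in $U$, yet the boundedness of the partial sums $\sum_i\Delta_i$ that confines the iterates to $B(\bar X,r)$ is itself a consequence of the KL inequality. This is resolved by the simultaneous induction above, in which at stage $k$ one only invokes the KL inequality at the already-controlled indices $1,\dots,k$; the delicate point is merely the bookkeeping of the radius budget in \eqref{X0}, chosen so that $\|X^0-\bar X\|_2+2\Delta_1+\tfrac{2M}{l}\varphi(F^0-\bar F)$ never reaches $r$. The remaining pieces --- the concavity inequality for $\varphi$, the AM--GM split, and the telescoping --- are routine.
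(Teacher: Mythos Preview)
Your proposal is correct and follows essentially the same approach as the paper's proof: both combine the sufficient-decrease estimate of Lemma \ref{lem1}, the subgradient bound of Lemma \ref{lem3}, and the KL inequality to obtain the recursion $2\Delta_{i+1}\le\Delta_i+\tfrac{2M}{l}\delta_i$ via AM--GM, then telescope and close the same induction on $k$ to trap the iterates in $B(\bar X,r)$ before concluding summability. The only cosmetic differences are your cleaner notation $\Delta_k,\delta_k$ and your explicit discussion of the edge case $F^k=\bar F$; the radius bookkeeping and the final bound $\|X^{k+1}-\bar X\|_2\le\|X^0-\bar X\|_2+2\Delta_1+\tfrac{2M}{l}\varphi(F^0-\bar F)$ match the paper exactly.
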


\begin{proof}
For simplicity, we use $\|\cdot\|$ to denote $\|\cdot\|_2$, and without loss of generality, we assume $\bar{F}=F(\bar{X})=0$.

First, it is clear from condition \eqref{X0} that $X^0\in B(\bar{X},r)$. Then, we consider the case $k=1$. Combined with Lemma \ref{lem1}, we have
\begin{equation}\label{}
  \|X^1-\bar{X}\|\leq\|X^1-X^0\|+\|X^0-\bar{X}\|\leq\sqrt{\frac{2}{l}F^0}+\|X^0-\bar{X}\|.
\end{equation}
Hence, by condition \eqref{X0}, we have $X^1\in B(\bar{X},r)$. Now, we will use induction to prove the two conclusions.

Supposed $X^i\in B(\bar{X},r), 1\leq i\leq k$, using KL property at $\bar{X}$ and Lemma \ref{lem3}, we have
$$
\varphi'(F^i)\geq \frac{1}{\|\partial F^i\|}\quad\Rightarrow\quad\varphi'(F^i)\geq \frac{1}{\|A_i\|}\geq\frac{1}{M\|X^i-X^{i-1}\|}.
$$
Combined with lemma \ref{lem1}, we have
$$
\varphi'(F^i)(F^{i}-F^{i+1})\geq\frac{l\|X^{i+1}-X^i\|^2}{2M\|X^i-X^{i-1}\|}.
$$

Notice that $\varphi$ is concave, so we have
$$
\varphi(F^{i})-\varphi(F^{i+1})\geq\frac{l\|X^{i+1}-X^i\|^2}{2M\|X^i-X^{i-1}\|}.
$$
Thus
\begin{align*}
&\frac{2M}{l}\|X^i-X^{i-1}\|(\varphi(F^{i})-\varphi(F^{i+1}))\geq \|X^{i+1}-X^i\|^2\\
\Rightarrow\quad & \sqrt{\|X^i-X^{i-1}\|\cdot\frac{2M}{l}(\varphi(F^{i})-\varphi(F^{i+1}))}\geq \|X^{i+1}-X^i\|\\
\Rightarrow\quad & \|X^i-X^{i-1}\|+\frac{2M}{l}(\varphi(F^{i})-\varphi(F^{i+1}))\geq 2\|X^{i+1}-X^i\|.
\end{align*}

Sum it up for $1\leq i\leq k$, then we have
\begin{equation}\label{sum}
\begin{split}
&\|X^1-X^0\|+\frac{2M}{l}(\varphi(F^1)-\varphi(F^{k+1}))\geq \sum_{i=1}^k\|X^{i+1}-X^i\|+\|X^{k+1}-X^k\|\\
\Rightarrow\quad &\|X^1-X^0\|+\frac{2M}{l}\varphi(F^1)\geq\sum_{i=1}^k\|X^{i+1}-X^i\|.
\end{split}
\end{equation}
Therefore,
\begin{align*}
&\|X^{k+1}-\bar{X}\|\leq \sum_{i=1}^k\|X^{i+1}-X^i\|+\|X^1-\bar{X}\|\\
\Rightarrow\quad& \|X^{k+1}-\bar{X}\|\leq \|X^1-X^0\|+\frac{2M}{l}\varphi(F^1)+\|X^1-\bar{X}\|\\
\Rightarrow\quad& \|X^{k+1}-\bar{X}\|\leq 2\sqrt{\frac{2}{l}F^0}+\frac{2M}{l}\varphi(F^0)+\|X^0-\bar{X}\|.
\end{align*}
From condition \eqref{X0}, we conclude that $X^{k+1}\in B(\bar{X},r)$. So we have proven the first conclusion. The second conclusion is straightforward from inequality \eqref{sum}.
\end{proof}

\begin{lem}\label{lem4}
Under the Assumption \ref{ass}, for any convergent subsequence $X^{k'}$  with limit point $X^*=(f^*,u^*,W_1^*,W_2^*,v_1^*,v_2^*)$, we have
\begin{equation*}
  X^{k'-1}\rightarrow X^*, k'\rightarrow\infty;\quad
\end{equation*}
and
\begin{equation*}
  \lim_{k'\rightarrow\infty} f_1(v_1^{k'})+f_2(v_2^{k'})=f_1(v_1^{*})+f_2(v_2^{*}), \lim_{k'\rightarrow\infty} F(X^{k'})=F(X^*).
\end{equation*}
\end{lem}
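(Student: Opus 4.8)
The plan is to prove the three claims of Lemma~\ref{lem4} in sequence, each reducing to the summability established earlier. First, the convergence $X^{k'-1}\to X^*$: since $X^{k'}\to X^*$ and, by Lemma~\ref{lem1}, $\|X^{k}-X^{k-1}\|_2\to 0$ as $k\to\infty$, we have
$$
\|X^{k'-1}-X^*\|_2\leq\|X^{k'-1}-X^{k'}\|_2+\|X^{k'}-X^*\|_2\to 0,
$$
so $X^{k'-1}\to X^*$ as well. This is immediate and sets up the index bookkeeping for the rest.

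Next I would prove $\lim_{k'\to\infty} f_1(v_1^{k'})+f_2(v_2^{k'})=f_1(v_1^*)+f_2(v_2^*)$. The functions $f_1,f_2$ are the (scaled) $\ell_0$ ``norms'', which are lower semi-continuous but not continuous, so one direction is free: $\liminf_{k'} \big(f_1(v_1^{k'})+f_2(v_2^{k'})\big)\geq f_1(v_1^*)+f_2(v_2^*)$. For the reverse inequality I would use the update rule for $v_1,v_2$ in \eqref{minv1v2}: since $v_i^{k'}$ minimizes $\lambda_i\|v_i\|_0+\frac{\mu_i}{2}\|W_i^{k'}\cdot-v_i\|_2^2+\frac{d_i}{2}\|v_i-v_i^{k'-1}\|_2^2$, comparing its value against the competitor $v_i=v_i^*$ gives
$$
\lambda_i\|v_i^{k'}\|_0+\tfrac{\mu_i}{2}\|W_i^{k'}\cdot-v_i^{k'}\|_2^2+\tfrac{d_i}{2}\|v_i^{k'}-v_i^{k'-1}\|_2^2
\leq \lambda_i\|v_i^*\|_0+\tfrac{\mu_i}{2}\|W_i^{k'}\cdot-v_i^*\|_2^2+\tfrac{d_i}{2}\|v_i^*-v_i^{k'-1}\|_2^2.
$$
Taking $\limsup$ as $k'\to\infty$, using $v_i^{k'-1}\to v_i^*$, $W_i^{k'}\to W_i^*$, $f^{k'}\to f^*$, $u^{k'}\to u^*$ (the first part plus continuity of the smooth quadratic terms), the quadratic and proximal terms on both sides converge to the \emph{same} limit $\frac{\mu_i}{2}\|W_i^*\cdot-v_i^*\|_2^2$, which cancels, leaving $\limsup_{k'}\lambda_i\|v_i^{k'}\|_0\leq\lambda_i\|v_i^*\|_0$. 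Combined with the lower-semicontinuity bound, this pins down $\lim_{k'}\big(f_1(v_1^{k'})+f_2(v_2^{k'})\big)=f_1(v_1^*)+f_2(v_2^*)$.

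Finally, for $\lim_{k'\to\infty}F(X^{k'})=F(X^*)$: write $F=f_1+f_2+g_1+g_2+Q$. The term $Q$ is a polynomial, hence continuous, so $Q(X^{k'})\to Q(X^*)$. The indicator terms $g_1,g_2$ are identically zero along the sequence because $W_i^{k}\in\mathcal{D}_i$ for all $k$ and $\mathcal{D}_i$ is closed, so $W_i^*\in\mathcal{D}_i$ and $g_i(W_i^*)=0=g_i(W_i^{k'})$. The $f_1+f_2$ part converges to $f_1(v_1^*)+f_2(v_2^*)$ by the previous step. Adding these gives $F(X^{k'})\to F(X^*)$. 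The main obstacle is the second step: $\ell_0$ is discontinuous, so one cannot simply pass to the limit, and the argument genuinely needs the optimality of the $v_i$-update together with the cancellation of the smooth quadratic terms — this is the only place where the algorithm's structure (rather than mere continuity) is used, and care is needed to ensure all the cross terms in expanding the squared norms converge so that the cancellation is exact.
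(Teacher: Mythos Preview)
Your proof is correct and follows essentially the same approach as the paper: the first claim via Lemma~\ref{lem1} and the triangle inequality, the third via continuity of $Q$ and the fact that $g_i$ vanish on $\mathcal{D}_i$, and the crucial second claim by comparing the optimality inequality of the $v_i$-update \eqref{minv1v2} against the competitor $v_i^*$ and combining with lower semicontinuity of $\|\cdot\|_0$. The only minor difference is that you take a $\limsup$ to get $\limsup_{k'}\big(f_1(v_1^{k'})+f_2(v_2^{k'})\big)\le f_1(v_1^*)+f_2(v_2^*)$ directly, whereas the paper takes a $\liminf$ to obtain $\liminf=\,$value and then invokes the monotonicity and boundedness of $\{F^k\}$ (hence convergence of $F^{k'}$) to upgrade the $\liminf$ to a genuine limit; your route is slightly more direct and avoids this extra step.
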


\begin{proof}
From Lemma \ref{lem1}, we have $\lim_{k'\rightarrow\infty}\|X^{k'-1}-X^{k'}\|_2=0$. Since $\lim_{k'\rightarrow\infty} X^{k'}=X^*$, for any $\epsilon>0$, there exists $K$, when $k'>K$,
\begin{align*}
&\|X^{k'-1}-X^{k'}\|_2<\epsilon/2, \quad \|X^{k'}-X^{*}\|_2<\epsilon/2\\
\Rightarrow\quad&\|X^{k'-1}-X^*\|_2\leq\|X^{k'-1}-X^{k'}\|_2+\|X^{k'}-X^{*}\|_2<\epsilon.
\end{align*}
Therefore, $\lim_{k'\rightarrow\infty}X^{k'-1}=X^*$.

From \eqref{minv1v2} in Algorithm \ref{alg1}, we have
\begin{equation*}\label{}
  \begin{aligned}
  &Q(u^{k'},f^{k'},v_1^{k'},W_1^{k'},v_2^{k'},W_2^{k'})+f_1(v_1^{k'})+f_2(v_2^{k'})+\frac{d_1}{2}\|v_1^{k'}-v_1^{k'-1}\|_2^2+\frac{d_2}{2}\|v_2^{k'}-v_2^{k'-1}\|_2^2\\
  \le &Q(u^{k'},f^{k'},v_1,W_1^{k'},v_2,W_2^{k'})+f_1(v_1)+f_2(v_2)+\frac{d_1}{2}\|v_1-v_1^{k'-1}\|_2^2+\frac{d_2}{2}\|v_2-v_2^{k'-1}\|_2^2,\quad\forall v_1,v_2.
  \end{aligned}
\end{equation*}
Replacing $v_1,v_2$ with $v_1^*,v_2^*$, and taking $k'$ to infinity, we have
\begin{equation}\label{lim}
\liminf _{k'\rightarrow\infty}f_1(v_1^{k'})+f_2(v_2^{k'})\leq f_1(v_1^*)+f_2(v_2^*).
\end{equation}
Note that $f_1$ and $f_2$ are lower semi-continuous. Together with \eqref{lim}, we have
\begin{equation}\label{E:lem2}
   \liminf_{k'\rightarrow+\infty}f_1(v_1^{k'})+f_2(v_2^{k'})= f_1(v_1^*)+f_2(v_2^*).
\end{equation}
On the other hand, we have $W_1^{k'}\in\mathcal{D}_1,W_2^{k'}\in\mathcal{D}_2$. Since $\mathcal{D}_1,\mathcal{D}_2$ are compact, we have $g_1(W_1^*)=g_1(W_1^{k'})=g_2(W_2^*)=g_2(W_2^{k'})=0$ for all $k'$. Since $Q$ is continuous, we have $$\lim_{k'\rightarrow +\infty}Q(u^{k'},f^{k'},v_1^{k'},W_1^{k'},v_2^{k'},W_2^{k'})=Q(u^*,f^*,v_1^*,W_1^*,v_2^*,W_2^*).$$ Note that $\{F^{k'}\}$ is monotonically decreasing and bounded. Thus, $F^{k'}$ is convergent which means that the liminf in \eqref{E:lem2} is in fact a regular limit. Consequently, we have $\lim_{k'\rightarrow\infty} F(X^{k'})=F(X^*)$.
\end{proof}

\begin{thm}\label{thm2}
\textsc{(Global Convergence)} Under the Assumption \ref{ass}, the sequence $\{X^k\}$ generated by Algorithm \ref{alg1} is globally convergent.
\end{thm}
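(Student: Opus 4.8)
The plan is to upgrade the subsequential convergence of Lemma \ref{lem2} to convergence of the whole sequence by invoking the local estimate of Theorem \ref{thm1} at a limit point, exploiting that $F$ is a KL function. First I would use Assumption \ref{ass} together with Lemma \ref{lem2} to extract a subsequence $\{X^{k_j}\}$ converging to some $X^*=(f^*,u^*,W_1^*,W_2^*,v_1^*,v_2^*)$. By Lemma \ref{lem1} the scalar sequence $\{F^k\}$ is monotonically nonincreasing and bounded below, hence convergent, and by Lemma \ref{lem4} its limit is exactly $\bar F:=F(X^*)$; in particular $F^k\downarrow\bar F$ and $F^k\ge\bar F$ for every $k$. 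I would first dispose of the degenerate case: if $F^{k_0}=\bar F$ for some $k_0$, then Lemma \ref{lem1} gives $\tfrac l2\|X^{k+1}-X^k\|_2^2\le F^k-F^{k+1}=0$ for all $k\ge k_0$, so the iterates are eventually constant and the claim is trivial. Hence one may assume $F^k>\bar F$ for all $k$.

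Since $F$ is a KL function, it has the KL property at $X^*\in\partial F$, so there exist $\eta>0$, a neighborhood $U$ of $X^*$, and a continuous concave $\varphi$ as in the definition; fix $r>0$ with $B(X^*,r)\subset U$. The key idea is then to "restart" Algorithm \ref{alg1} from a sufficiently deep tail: because $X^{k_j}\to X^*$ and $F^{k_j}\to\bar F$ while $\varphi$ is continuous with $\varphi(0)=0$, I can choose $j$ large enough that $\bar F<F^{k_j}<\bar F+\eta$ (hence, by monotonicity, $\bar F<F^k<\bar F+\eta$ for all $k\ge k_j$) and
$$
2\sqrt{\tfrac2l\,(F^{k_j}-\bar F)}+\tfrac{2M}{l}\,\varphi(F^{k_j}-\bar F)+\|X^{k_j}-X^*\|_2<r ,
$$
which is precisely condition \eqref{X0} of Theorem \ref{thm1} with $X^{k_j}$ in the role of $X^0$ and $\bar X=X^*$. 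Applying Theorem \ref{thm1} to the tail sequence $X^{k_j},X^{k_j+1},\dots$ (which is again a sequence generated by Algorithm \ref{alg1}) yields $\sum_{k\ge k_j}\|X^{k+1}-X^k\|_2<\infty$, so $\{X^k\}$ is a Cauchy sequence in the finite-dimensional product space and therefore converges to some $X^\infty$. Since the subsequence $X^{k_j}$ converges to $X^*$, necessarily $X^\infty=X^*$, and the whole sequence $\{X^k\}$ converges, which is the assertion of Theorem \ref{thm2}.

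The main obstacle is the passage from the strictly local statement of Theorem \ref{thm1} — which presupposes that the starting iterate lies close to a KL point and that the objective values stay in the window $(\bar F,\bar F+\eta)$ — to an unconditional global conclusion. This gap is closed by Lemma \ref{lem4}, whose role is exactly to guarantee $F^{k_j}\to F(X^*)$ (a genuine limit, not merely a liminf), so that a late-enough tail of $\{X^k\}$ can serve as a legitimate initial point verifying \eqref{X0}; verifying that technical inequality, via continuity of $\varphi$ at $0$ and $X^{k_j}\to X^*$, is the crux of the argument. The degenerate finite-termination case must be handled separately, but it follows immediately from the sufficient-decrease inequality of Lemma \ref{lem1}.
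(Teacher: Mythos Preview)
Your proposal is correct and follows essentially the same approach as the paper's proof: extract a convergent subsequence via Lemma~\ref{lem2}, use Lemma~\ref{lem4} to identify the limit of $\{F^k\}$ as $F(X^*)$, dispose of the finite-termination case via the sufficient-decrease inequality of Lemma~\ref{lem1}, and in the strict-decrease case restart Theorem~\ref{thm1} from a sufficiently deep tail iterate so that condition~\eqref{X0} is met. If anything, you are slightly more explicit than the paper in justifying why~\eqref{X0} can be achieved (continuity of $\varphi$ at $0$ combined with $X^{k_j}\to X^*$ and $F^{k_j}\to\bar F$) and why the monotonicity of $\{F^k\}$ keeps the whole tail inside the KL window $(\bar F,\bar F+\eta)$.
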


\begin{proof}
From Lemma \ref{lem2}, we know the sequence$\{X^k\}$ generated by algorithm\ref{alg1} has a convergent subsequence $\{X^{k'}\}$ with $\lim_{k'\rightarrow\infty}X^{k'}=X^*$. Let $F^k=F(X^k)$ and $F^*=F(X^*)$. Lemma \ref{lem4} tells us that $\lim_{k'\to\infty} F^{k'}=F^*$.

Suppose there is a $k$ such that $F^k=F^*$. Then, Lemma \ref{lem1} implies that $F^j=F^*$ for all $j\ge k$, and hence $X^j=X^*$ for all $j\ge k$. Therefore, $X^k$ converges.

Suppose $F^k>F^*$ for all $k$. Take $\bar{X}$ to be $X^*$ in Theorem \ref{thm1}. Choose $K'\in\{k'\}$ large enough so that $2\sqrt{\frac{2}{l}(F^{K'}-\bar{F})}+\frac{2M}{l}\phi(F^{K'}-\bar{F})+\|X^{K'}-\bar{X}\|<r$. Then, we have global convergence by taking $X^{K'}$ as the new initial point $X^0$ in Theorem \ref{thm1}.
\end{proof}

In the following, we will prove that the limit of $\{X^k\}$ is a stationary point of our SRD-DDTF model.

\begin{thm}\textsc{(Stationary Point)}
Under Assumption \ref{ass}, the sequence $X^k:=(f^k,u^k,W_1^k,W_2^k,v_1^k,v_2^k)$ globally converges to a stationary point of the SRD-DDTF model \eqref{obj}.
\end{thm}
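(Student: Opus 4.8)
The plan is to combine the global convergence of $\{X^k\}$ established in Theorem \ref{thm2} with the subdifferential estimate of Lemma \ref{lem3} and the function-value continuity of Lemma \ref{lem4}, following the standard KL-based closedness argument. Let $X^*=(f^*,u^*,W_1^*,W_2^*,v_1^*,v_2^*)$ be the global limit of the whole sequence $\{X^k\}$ (Theorem \ref{thm2}); we must show $0\in\partial_F F(X^*)$.

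First I would invoke Lemma \ref{lem3}, which produces $A^k\in\partial F^k=\partial F(X^k)$ with $\|A^k\|_2\le M\|X^k-X^{k-1}\|_2$. By Lemma \ref{lem1} we have $\|X^k-X^{k-1}\|_2\to 0$, hence $A^k\to 0$. Next, since the full sequence converges, $X^k\to X^*$ and also $X^{k-1}\to X^*$ (or directly apply Lemma \ref{lem4} along the whole sequence, which is now a convergent sequence). Lemma \ref{lem4} gives $F(X^k)\to F(X^*)$: indeed $Q$ is continuous, $g_1,g_2$ vanish identically along the iterates and at the limit because $W_i^k,W_i^*$ lie in the compact sets $\mathcal{D}_i$, and the $\ell_0$ terms satisfy $f_1(v_1^k)+f_2(v_2^k)\to f_1(v_1^*)+f_2(v_2^*)$ by the argument in Lemma \ref{lem4} (lower semicontinuity plus the variational inequality coming from the $v$-update). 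Therefore we have a sequence $X^k\to X^*$ with $F(X^k)\to F(X^*)$ and $A^k\in\partial F(X^k)$, $A^k\to 0$. By the definition of the limiting subdifferential $\partial F$ (the closedness property built into the definition), this yields $0\in\partial F(X^*)$.

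To upgrade $0\in\partial F(X^*)$ to the stronger statement $0\in\partial_F F(X^*)$ (the notion of stationary point used in this paper, as noted in the Remark), I would use the separable structure $F=f_1(v_1)+f_2(v_2)+g_1(W_1)+g_2(W_2)+Q$. Since $Q$ is $C^1$ (a polynomial), the Fréchet subdifferential splits as $\partial_F F(X^*)=\partial_F f_1(v_1^*)\times\partial_F f_2(v_2^*)\times\partial_F g_1(W_1^*)\times\partial_F g_2(W_2^*)\times\{0\}\times\{0\}$ shifted by $\nabla Q(X^*)$ in the appropriate blocks; more precisely, componentwise, $0\in\partial_F F(X^*)$ is equivalent to the six inclusions $Q_f(X^*)=0$, $Q_u(X^*)=0$, $-Q_{W_i}(X^*)\in\partial_F g_i(W_i^*)$, $-Q_{v_i}(X^*)\in\partial_F f_i(v_i^*)$. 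The $f,u$ equalities follow by passing to the limit in the optimality conditions of \eqref{minf}, \eqref{minu} (using $\|X^{k+1}-X^k\|\to0$ to kill the proximal terms and continuity of $\nabla Q$). For the $v_i$ and $W_i$ blocks I would pass to the limit in the exact subproblem optimality conditions; here one must check that the limiting relation lands in the \emph{Fréchet} subdifferential, which for $\|\cdot\|_0$ requires that the support of $v_i^k$ stabilizes — this is exactly the kind of argument carried out in \cite{bao2014convergence}, and it can be cited or reproduced since $\|X^{k+1}-X^k\|\to0$ forces the hard-threshold pattern to eventually freeze.

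The main obstacle is the last point: handling the $\ell_0$ term rigorously. The generic KL machinery only delivers $0\in\partial F(X^*)$ with the limiting subdifferential, whereas the paper claims the stronger Fréchet-stationarity. Bridging this gap requires showing the active set of $v_i^k$ (and the relevant sign/threshold structure) is eventually constant, so that near $X^*$ the function $F$ locally coincides with a smooth function on a fixed subspace and the two subdifferentials agree at $X^*$. This support-stabilization argument — leveraging $v_i^{k+1}-v_i^k\to0$ together with the strict hard-thresholding rule in \eqref{D:HT} and the proximal parameters $d_i>0$ — is the technical heart, and I would model it closely on the corresponding step in \cite{bao2014convergence}, adapting it to the four-block setting here.
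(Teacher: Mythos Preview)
Your approach is correct but genuinely different from the paper's. The paper does not use Lemma \ref{lem3} or the closedness of the limiting subdifferential at all; instead it passes the four subproblem optimality \emph{inequalities} to the limit (using $\|X^{k+1}-X^k\|\to0$ and Lemma \ref{lem4}) to obtain coordinatewise inequalities of the form $F^*\le F(f^*+\delta f,u^*,W_1^*,W_2^*,v_1^*,v_2^*)+\tfrac{a}{2}\|\delta f\|^2$, etc., and then verifies $0\in\partial_F F(X^*)$ directly from the definition by a first-order Taylor expansion of the smooth part $Q$ that decouples the liminf into the four coordinate blocks. Your route---$A^k\in\partial F(X^k)$, $A^k\to0$, $F(X^k)\to F(X^*)$, hence $0\in\partial F(X^*)$ by closedness---is the standard KL recipe and reuses Lemma \ref{lem3} cleanly; the paper's route is more elementary in that it never touches subdifferential calculus or closedness, only the raw definition of $\partial_F$.

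One remark on your ``main obstacle'': you over-engineer the upgrade from $\partial F$ to $\partial_F F$. For this particular $F$ the two coincide. The nonsmooth pieces are separable in distinct variables, $Q$ is $C^1$, and for each piece one has $\partial_F=\partial$: for $\|\cdot\|_0$ at $v$ with support $S$, both subdifferentials equal $\{w:w_S=0\}$ (the condition $f(v^n)\to f(v)$ in the limiting definition already forces $\mathrm{supp}(v^n)=S$ eventually), and for the indicator of the orthogonal group, a smooth compact manifold, the Fr\'echet and limiting normal cones agree. So the support-stabilization argument you propose is unnecessary here; once you have $0\in\partial F(X^*)$ you are done.
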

\begin{proof}
Theorem \ref{thm2} tells us that $X^k$ is globally convergent. Denote its limit point by $X^*=(f^*,u^*,W_1^*,W_2^*,v_1^*,v_2^*)$ and $F(X^*)=F^*$. From Algorithm \ref{alg1}, we have the following four inequalities:
\begin{equation*}\label{}
  \begin{aligned}
  &Q(f^{k+1},u^k,W_1^k,W_2^k,v_1^k,v_2^k)+\frac{a}{2}\|f^{k+1}-f^k\|_2^2\\
  \leq\quad& Q(f,u^k,W_1^k,W_2^k,v_1^k,v_2^k)+\frac{a}{2}\|f-f^k\|_2^2,\quad \forall f;\\
  &Q(f^{k+1},u^{k+1},W_1^k,W_2^k,v_1^k,v_2^k)+\frac{b}{2}\|u^{k+1}-u^k\|_2^2\\
  \leq\quad& Q(f^{k+1},u,W_1^k,W_2^k,v_1^k,v_2^k)+\frac{b}{2}\|u-u^k\|_2^2, \quad \forall u;\\
  &Q(f^{k+1},u^{k+1},W_1^{k+1},W_2^{k+1},v_1^k,v_2^k)+g_1(W_1^{k+1})+g_2(W_2^{k+1})+\frac{c_1}{2}\|W_1^{k+1}-W_1^k\|_2^2+\frac{c_2}{2}\|W_2^{k+1}-W_2^k\|_2^2\\
  \leq\quad & Q(f^{k+1},u^{k+1},W_1,W_2,v_1^k,v_2^k)+g_1(W_1)+g_2(W_2)+\frac{c_1}{2}\|W_1-W_1^k\|_2^2+\frac{c_2}{2}\|W_2-W_2^k\|_2^2, \quad\forall W_1,W_2;\\
  &Q(f^{k+1},u^{k+1},W_1^{k+1},W_2^{k+1},v_1^{k+1},v_2^{k+1})+f_1(v_1^{k+1})+f_2(v_2^{k+1})+\frac{d_1}{2}\|v_1^{k+1}-v_1^k\|_2^2+\frac{d_2}{2}\|v_2^{k+1}-v_2^k\|_2^2\\
  \leq\quad & Q(f^{k+1},u^{k+1},W_1^{k+1},W_2^{k+1},v_1,v_2)+f_2(v_2)+f_1(v_1)+\frac{d_1}{2}\|v_1-v_1^k\|_2^2+\frac{d_2}{2}\|v_2-v_2^k\|_2^2, \quad\forall v_1,v_2.\\
  \end{aligned}
\end{equation*}
Taking $k\rightarrow\infty$, we have
\begin{equation}\label{coordinate}
  \left\{\begin{aligned}
  &F^*\leq F(f^*+\delta f,u^*,W_1^*,W_2^*,v_1^*,v_2^*)+\frac{a}{2}\|\delta f\|_2^2,\quad\forall \delta f;\\
  &F^*\leq F(f^*,u^*+\delta u,W_1^*,W_2^*,v_1^*,v_2^*)+\frac{b}{2}\|\delta u\|_2^2,\quad\forall \delta u;\\
    &F^*\leq F(f^*,u^*,W_1^*+\delta W_1,W_2^*+\delta W_2,v_1^*,v_2^*)+\frac{c_1}{2}\|\delta W_1\|_2^2+\frac{c_2}{2}\|\delta W_2\|_2^2,\quad\forall \delta W_1,\delta W_2;\\
  &F^*\leq F(f^*,u^*,W_1^*,W_2^*,v_1^*+\delta v_1,v_2^*+\delta v_2)+\frac{d_1}{2}\|\delta v_1\|_2^2+\frac{d_2}{2}\|\delta v_2\|_2^2,\quad\forall \delta v_1,\delta v_2.\\
  \end{aligned}\right.
\end{equation}

Therefore, for any $\delta X=(\delta f,\delta u,\delta W_1, \delta W_2,\delta v_1,\delta v_2)$, we have
\begin{equation}\label{}
  \begin{aligned}
  &\quad\liminf_{\|\delta X\|\rightarrow 0} \frac{F(X^*+\delta X)-F(X^*)}{\|\delta X\|}\\
  &= \liminf_{\|\delta X\|\rightarrow 0}  \frac{Q(X^*+\delta X)-Q(X^*)+(f_1(v_1^*+\delta v_1)+f_2(v_2^*+\delta v_2))-(f_1(v_1^*)+f_2(v_2^*))}{\|\delta X\|}\\
  \quad &+\frac{(g_1(W_1^*+\delta W_1)+g_2(W_2^*+\delta W_2))-(g_1(W_1^*)+g_2(W_2^*))}{\|\delta X\|}\\
    &=\liminf_{\|\delta X\|\rightarrow 0}  \frac{\langle \nabla Q(X^*),\delta X\rangle+(f_1(v_1^*+\delta v_1)+f_2(v_2^*+\delta v_2))-(f_1(v_1^*)+f_2(v_2^*))}{\|\delta X\|}\\
   \quad & +\frac{(g_1(W_1^*+\delta W_1)+g_2(W_2^*+\delta W_2))-(g_1(W_1^*)+g_2(W_2^*))+o(\|\delta X\|)}{\|\delta X\|}\\
  &=  \liminf_{\|\delta X\|\rightarrow 0} \left(\frac{Q(f^*,u^*,W_1^*,W_2^*,v_1^*+\delta v_1,v_2^*+\delta v_2)-Q(X^*)}{\|\delta X\|}\right.\\
  \quad &\quad\quad+\frac{f_1(v_1^*+\delta v_1)+f_2(v_2^*+\delta v_2)-f_1(v_1^*)-f_2(v_2^*)+o(\|\delta v_1\|+\|\delta v_2\|)+o(\|\delta X\|)}{\|\delta X\|}\\
  \quad &+\frac{Q(f^*,u^*,W_1^*+\delta W_1,W_2^*+\delta W_2,v_1^*,v_2^*)-Q(X^*)}{\|\delta X\|}\\
  \quad &\quad\quad+\frac{g_1(W_1^*+\delta W_1)+g_2(W_2^*+\delta W_2)-g_1(W_1^*)-g_2(W_2^*)+o(\|\delta W_1\|+\|\delta W_2\|)}{\|\delta X\|}\\
  \quad &+ \frac{Q(f^*,u^*+\delta u,W_1^*,W_2^*,v_1^*,v_2^*)-Q(X^*)+o(\|\delta u\|)}{\|\delta X\|}\\
  \quad &+\left. \frac{Q(f^*+\delta f,u^*,W_1^*,W_2^*,v_1^*,v_2^*)-Q(X^*)+o(\|\delta f\|)}{\|\delta X\|}\right)\\
  &\geqslant\liminf_{\|\delta X\|\rightarrow 0} \frac{-\frac{a}2\|\delta f\|^2-\frac{b}2\|\delta u\|^2-\frac{c_1}2\|\delta W_1\|^2-\frac{c_2}2\|\delta W_2\|^2-\frac{d_1}2\|\delta v_1\|^2-\frac{d_2}2\|\delta v_2\|^2}{\|\delta X\|}\\
  &+\frac{o(\|\delta v_1\|+\|\delta v_2\|)+o(\|\delta u\|+\|\delta f\|+\|\delta W_1\|+\|\delta W_2\|)+o(\|\delta X\|)}{\|\delta X\|}\\
  &=0.
  \end{aligned}
\end{equation}
where the second and third identity follow from the first order Taylor expansion of $Q$ and the inequality follows from the inequalities \eqref{coordinate}. By definition, the limit point $X^*$ is a stationary point of our model \eqref{obj}.
\end{proof}

\section{Numerical Experiments}\label{Sec:Numerical}

It has been shown in \cite{Dong2013X} that wavelets based inpainting model \eqref{primal} has better image restoration performance than TV-based model and wavelet analysis model. Therefore, in this section, we will focus on comparing our proposed SRD-DDTF model \eqref{obj} with wavelet frame based model \eqref{primal} using the same initial value given by the analysis model \eqref{anwave}. We show that the SRD-DDTF model can achieve noticeably better image reconstruction results.

Throughout our experiments, all data is synthesized by $f=Pu+\epsilon$, where $\epsilon$ is some Gaussian white noise. The standard deviation of noise is chosen to be $\max(|f|)/300$. In our model \eqref{obj}, we always set $\kappa=1$, which is the same as the parameter $\kappa$ in \eqref{primal}. Empirically, we observe that $\mu_1\thickapprox5200\lambda_1$, $\mu_2\thickapprox8400\lambda_2$ is a good choice. The patch size for $u$ is $8 \times 8$, and the patch size for $f$ is $8\times2$ to properly adapt to the shape of matrix $f$ which has much more rows (number of detectors) than columns (number of angular projections). Parameters $\lambda_1$ and $\lambda_2$, which are the same as the parameter $\lambda_1$ and $\lambda_2$ in model \eqref{primal}, vary case by case and are chosen manually for optimal image reconstruction results. We use the analysis model \eqref{anwave} to obtain initial estimates for both our algorithm and model \eqref{primal}. We find in our experiments that the value of $\lambda$ in the analysis model has small effects on the performance of both algorithms.

The experiments are conducted, with different configurations, on a real patient's image data (provided by Dr. Xun Jia from Department of Radiation Oncology, University of Texas, Southwestern Medical Center) and the popular test data set NURBS-based cardiac-torso (NCAT) phantom \cite{Segars2000Development}. We shall refer to the former simply as ``head" and the latter as ``NCAT". In addition to visual observation given by Figure \ref{data1figure} and Figure \ref{data2figure}, we use relative error, correlation and computation time to quantify the quality of the model \eqref{primal} and our SRD-DDTF model \eqref{obj} (see Table \ref{data1}). The relative error and correlation for reconstructed $u$ based on ground truth image $u_t$ are defined as follows:
\begin{equation}\label{}
  err(u_t,u)=\frac{\|u-u_t\|_2}{\|u_t\|_2},\quad corr(u_t,u)=\frac{(u-\bar{u})(u_t-\bar{u}_t)}{\|u-\bar{u}\|_2\|u_t-\bar{u}_t\|_2}
\end{equation}
where $\bar{u}$, $\bar{u}_t$ denote the mean value of $u$ and $u_t$. As we can see from Figure \ref{data1figure}, Figure \ref{data2figure} and Table \ref{data1} that our SRD-DDTF model \eqref{obj} managed to achieve better image reconstruction results than the model \eqref{primal} of \cite{Dong2013X} for all configurations.

The stopping criterion we used takes the form $err(u^{k},u^{k-1})\leq \epsilon$ together with a maximum allowable iteration 1000, where $\epsilon=0.001$ for the image ``head" and $\varepsilon=0.005$ for ``NCAT". We consider the configurations with $N_P=15,30,45,60$ for ``head" and $N_P=60,75,90$ for ``NCAT". Table \ref{data1} shows that the SRD-DDTF model achieves noticeably better reconstruction with less relative errors and higher correlations based on ground truth images. It is remarkable that for ``NCAT", the results of the SRD-DDTF model can even achieve better image restoration than that of model \eqref{primal} using the next larger projection number. The reconstructed image for each configuration is shown in Figure \ref{data1figure} and Figure \ref{data2figure}, and it is worth noticing that the SRD-DDTF model is managed to recover some key structures that are lost by model \eqref{primal}. In particular, we list some zoom-in views of the results in Figure \ref{loc} for $N_p=15$ to show that our model is capable of restoring subtle features.

\begin{table}
  \centering
  \begin{tabular}{|c|c|c|c|c|c|c|c|c|}
    \hline
    % after \\: \hline or \cline{col1-col2} \cline{col3-col4} ...
    \multicolumn{9}{|c|}{``Head"}\\
    \hline
    \multirow{2}*{NP}& \multicolumn{2}{c|}{initial value}  &  \multicolumn{3}{c|}{Model \eqref{primal} of \cite{Dong2013X}}&\multicolumn{3}{c|}{SRD-DDTF model \eqref{obj}} \\
    \cline{2-9}
     & err & corr & err & corr & time & err & corr & time \\
     \hline
    15 & 14.09 & 98.29 & 12.70 & 98.61 & 288.91 & 10.50 & 99.05 & 629.92 \\
    30 & 6.79 & 99.61 & 6.25 & 99.67 & 1186.54 & 5.39 & 99.75 & 930.65 \\
    45 & 5.20 & 99.77 &4.70& 99.81 &1550.63 &4.24& 99.85 & 1389.01\\
    60 & 4.16 & 99.85 & 3.89 & 99.87 & 319.29 & 3.58 & 99.89 & 1785.50 \\
    \hline
    \multicolumn{9}{|c|}{``NCAT"}\\
    \hline
        \multirow{2}*{NP}& \multicolumn{2}{c|}{initial value}  &  \multicolumn{3}{c|}{Model \eqref{primal} of \cite{Dong2013X}}&\multicolumn{3}{c|}{SRD-DDTF model \eqref{obj}} \\
    \cline{2-9}

     & err & corr & err & corr & time & err & corr & time \\ \hline
     60& 9.55& 99.35 & 5.00 &99.82 & 239.47&4.23 & 99.87&955.67\\
     75& 9.02& 99.42 & 4.61 &99.85 &296.71& 3.65 & 99.90&1681.94\\
     90 & 8.81 & 99.45& 4.21 & 99.87 & 303.98 & 3.41 & 99.91 & 2339.88 \\
     \hline
  \end{tabular}
  \caption{Comparison of relative errors (in percentage), correlations (in percentage) and running time (in seconds). }\label{data1}
\end{table}

\begin{figure}[htp]
  \centering
  % Requires \usepackage{graphicx}
  \includegraphics[width=9cm]{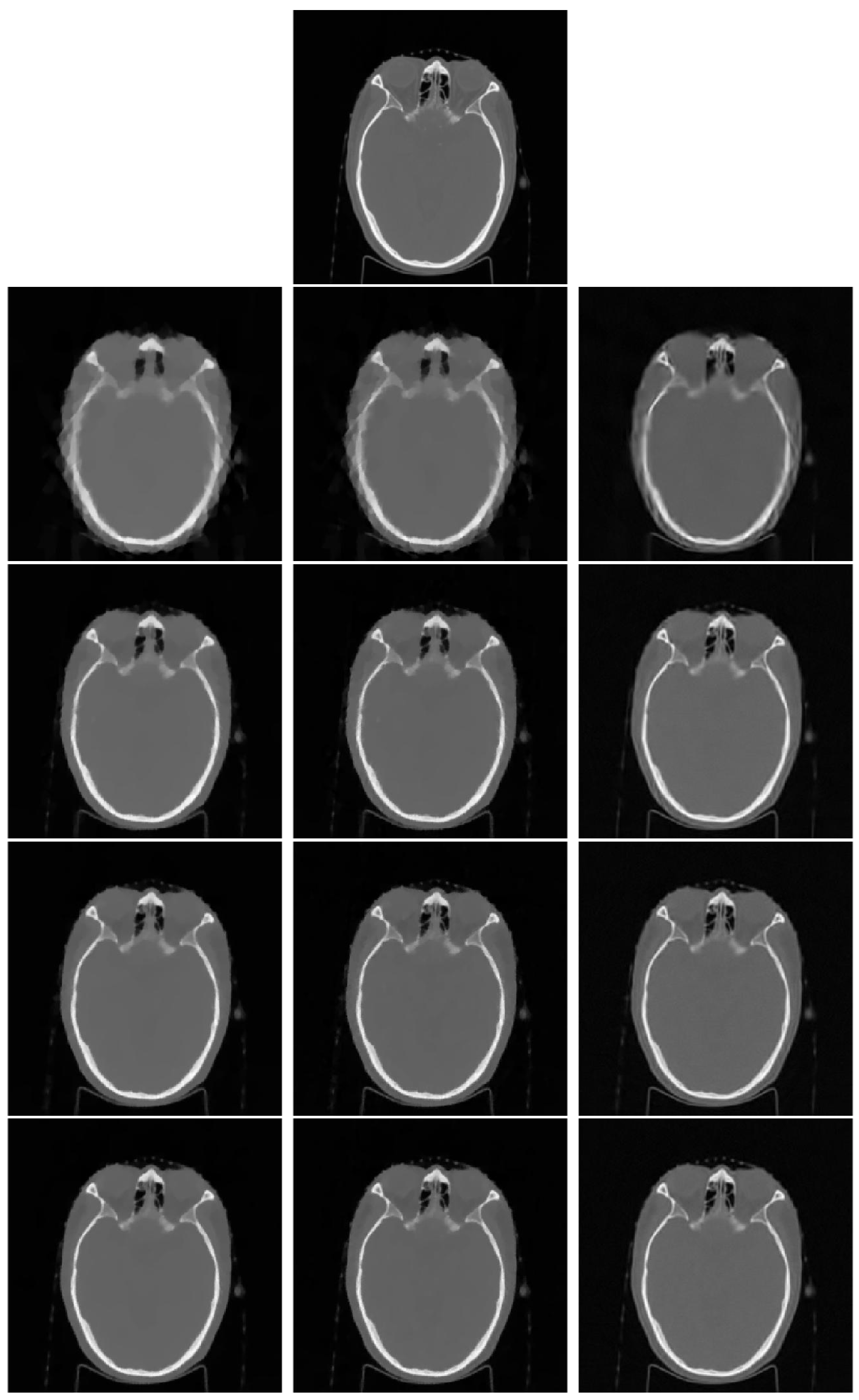}
  \caption{The tomographic results for the image ``head". The image on the top is the true data. The following rows represent the results using 15,30,45,60 projections, respectively. Images from left to right in each row are results from initial value, wavelets based inpainting model and our SRD-DDTF model.}\label{data1figure}
\end{figure}

\begin{figure}[htp]
  \centering
  % Requires \usepackage{graphicx}
  \includegraphics[width=14cm]{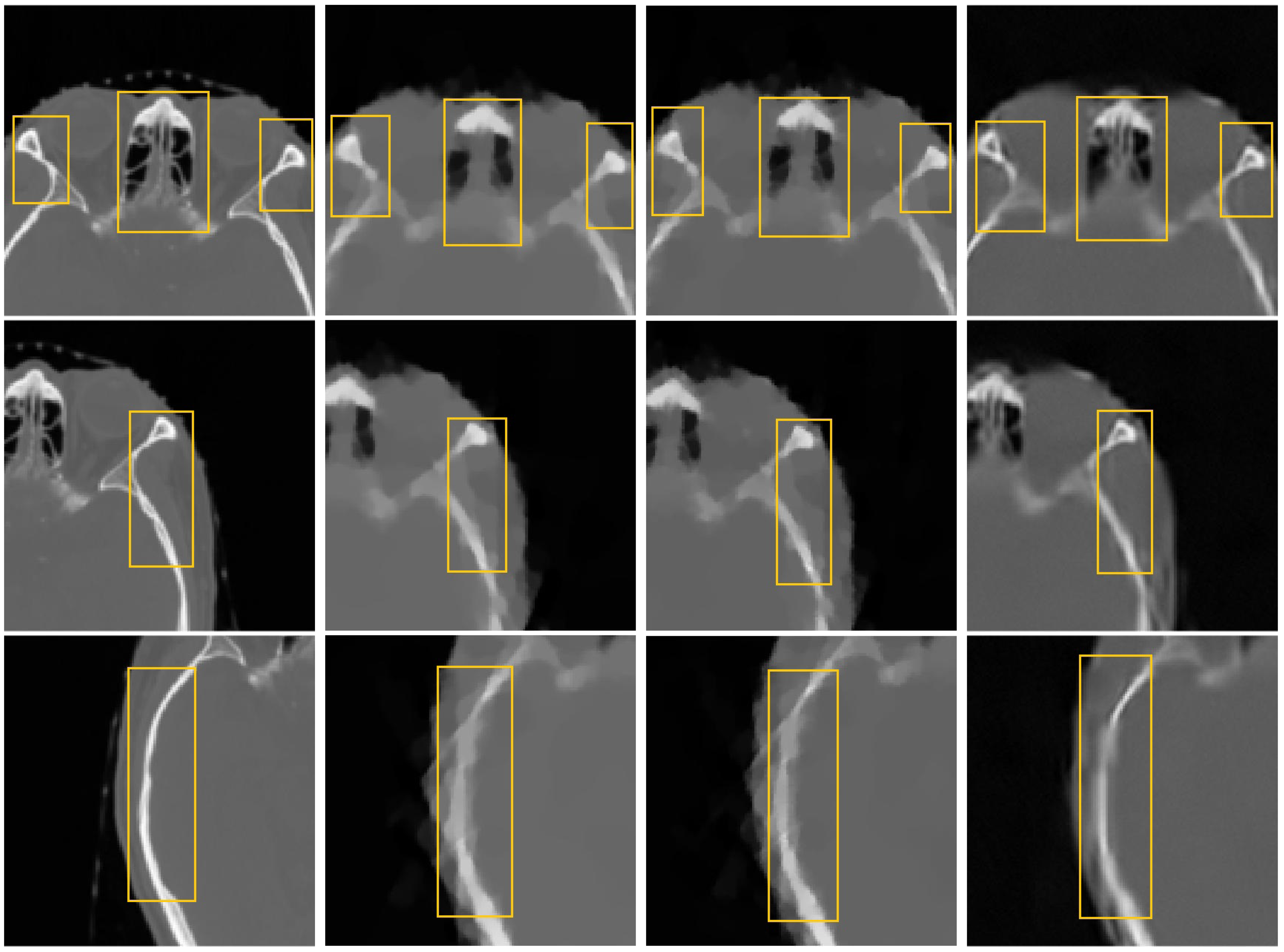}
  \caption{Local comparison of restored images for the image ``head" with $N_p=15$. Images from left to right in each row are zoom-in patterns from the ground truth, the initial value, wavelets based inpainting model restored image, and our SRD-DDTF model.}\label{loc}
\end{figure}

\begin{figure}[htp]
  \centering
  % Requires \usepackage{graphicx}
  \includegraphics[width=12cm]{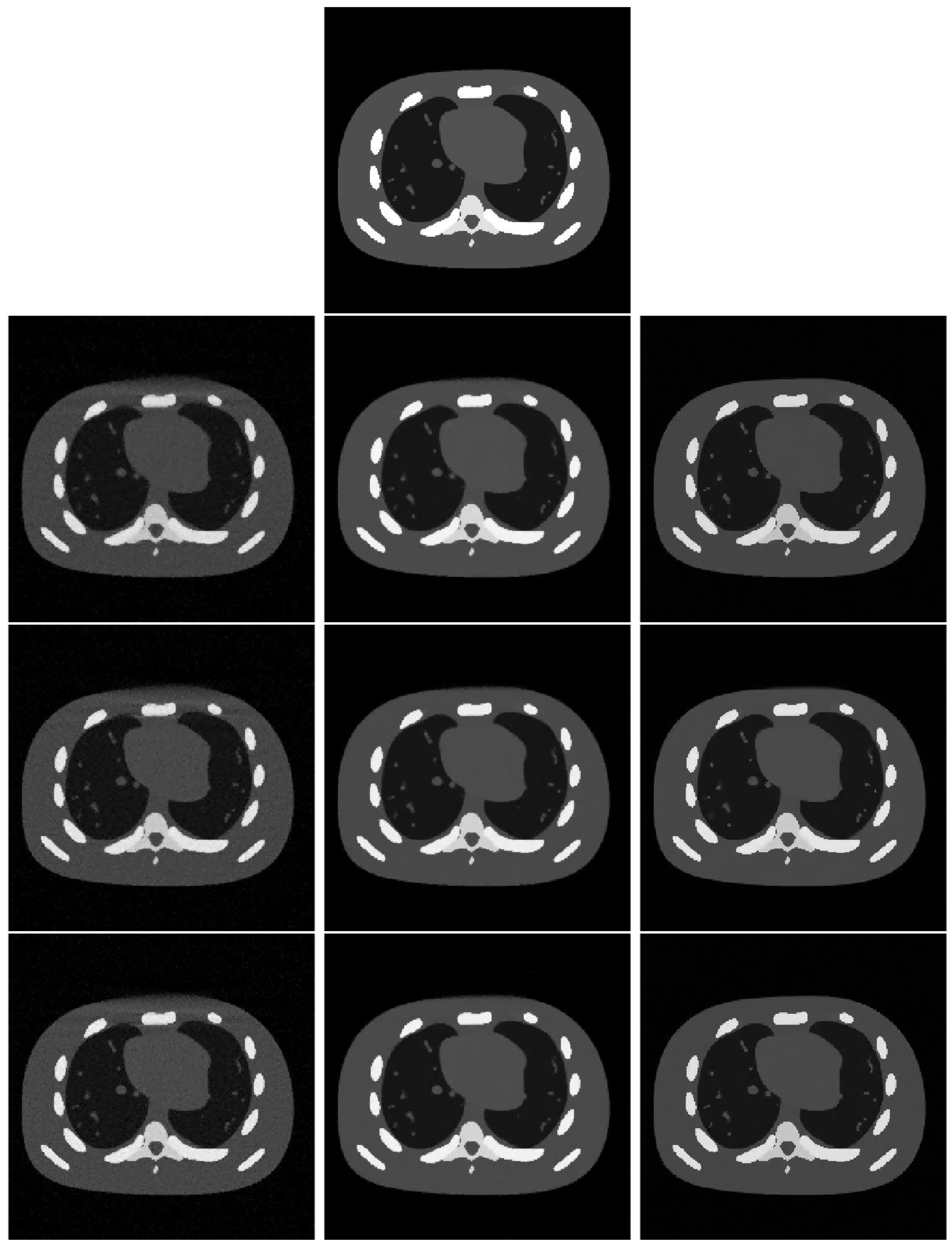}
  \caption{The tomographic results for ``NCAT". The image on the top is the true data. The following rows represent the results using 60,75,90 projections, respectively. Images from left to right in each row are results from initial value, wavelets based inpainting model and our SRD-DDTF model.}\label{data2figure}
\end{figure}

\section{Conclusion}\label{Sec:Conclusion}
In this paper, we proposed a new spatial-Radon domain CT image reconstruction model based on data-driven tight frames (SRD-DDTF), together with an efficient alternative minimization algorithm. Our convergence analysis on the proposed algorithm indicated that, under suitable assumptions, the sequence generated by the algorithm converges to a stationary point of the proposed model. Our numerical experiments showed that our model \eqref{obj} can obtain noticeably better reconstruction results than those from the model \eqref{primal}, which showed that using data-driven tight frames as sparsity priors for both the CT image and the projection image performs better than using pre-determined systems, such as tight wavelet frames, that may not be ideal for a specifically given image data.

\bibliographystyle{ieeetr}
\bibliography{ReferenceLibrary}
\end{document}